\title{Towards efficient decoding of classical-quantum polar codes}
\titlerunning{Towards efficient decoding of classical-quantum polar codes} 
\author[1]{Mark M. Wilde}
\author[2]{Olivier Landon-Cardinal}
\author[1]{Patrick Hayden}
\affil[1]{School of Computer Science, McGill University\\
  3480 University Street, Montreal, Quebec H3A 2A7, Canada\\
  \texttt{mwilde@gmail.com; patrick@cs.mcgill.ca}}
\affil[2]{D\'epartement de Physique, Universit\'e de Sherbrooke\\
   Sherbrooke, Qu\'ebec J1K 2R1, Canada\\
   \texttt{olivier.landon-cardinal@usherbrooke.ca}}
\authorrunning{M.~M.~Wilde, O.~Landon-Cardinal, and P.~Hayden} 
\subjclass{H.1.1 Systems and Information Theory, E.4 Coding and Information Theory, Error control codes}
\keywords{classical-quantum channel, classical-quantum polar codes, quantum likelihood ratio,
quantum successive cancellation decoder}
\let\originalleft\left
\let\originalright\right
\def\left#1{\mathopen{}\originalleft#1}
\def\right#1{\originalright#1\mathclose{}}
\begin{document}

\maketitle

\begin{abstract}
Known strategies for sending bits at the capacity rate over a general channel with classical input and quantum output (a cq channel) require the decoder to implement impractically complicated collective measurements.
Here, we show that
a fully collective strategy is not necessary in order to recover all of the
information bits. In fact, when coding for a large number $N$ uses of a cq channel $W$, 
$N\cdot I\left(  W_{\text{acc}}\right)  $ of the bits can
be recovered by a non-collective strategy which amounts to coherent quantum processing
of the results of product measurements, where
$I\left(  W_{\text{acc}}\right)  $ is the accessible information of the
channel~$W$. In order to decode the
other $N\left(  I\left(  W\right)  -I\left(  W_{\text{acc}}\right)  \right)  $
bits, where $I\left(  W\right)  $ is the Holevo rate, our conclusion is that the receiver should
employ collective measurements. We also present two other
results:\ 1)\ collective Fuchs-Caves measurements (quantum likelihood ratio
measurements) can be used at the receiver to achieve the Holevo rate and
2)\ we give an explicit form of the Helstrom measurements used in small-size
polar codes. The main approach used to demonstrate these results is a quantum
extension of Arikan's polar codes.

\end{abstract}

\section{Introduction}

One of the most impressive recent developments in coding theory is the theory
of polar codes~\cite{A09}. These codes are provably capacity achieving, and
their encoding and decoding complexities are both $O\left(  N\log N\right)  $,
where $N$ is the number of channel uses. Polar codes are based on the channel
polarization effect, in which a recursive encoding induces a set of $N$
synthesized channels from $N$ instances of the original channel, such that
some of the synthesized channels are nearly perfect and the others are nearly
useless. The fraction of synthesized channels that is nearly perfect is equal to
the capacity of the channel, and thus the coding scheme is simple:\ send the
information bits through the synthesized channels that are nearly perfect.

An essential component of the polar coding scheme is Arikan's successive
cancellation decoding algorithm \cite{A09}. This algorithm is channel
dependent and operates as its name suggests:\ it decodes the information bits
one after another, using previously decoded information to aid in constructing
a test for decoding each bit in succession. In particular, the test for
decoding each information bit is a likelihood ratio test. Due to the structure
in the polar encoder, there is a great deal of  structure in the decoding tests, so much
so that each likelihood ratio can be recursively computed.\ The upshot is that
the complexity of the decoding algorithm is $O\left(  N\log N\right)  $.

Recently, there has been some effort in extending the theory of polar coding
to the problem of transmission over quantum
channels \cite{WG11,RDR11,WR12,WR12a}. In particular, these works developed
the theory of polar coding for transmitting classical data over an arbitrary
quantum channel \cite{WG11}, private classical data over an arbitrary quantum
channel \cite{WR12a}, quantum data over a quantum Pauli or erasure channel
\cite{RDR11}, and quantum data over an arbitrary quantum channel \cite{WR12}.
To prove that the polar coding schemes in Refs.~\cite{WG11,WR12,WR12a} achieve
communication rates equal to well-known formulas from quantum information
theory, the authors of these works constructed a quantum successive
cancellation decoder as a sequence of quantum hypothesis tests (in the spirit
of Arikan \cite{A09}) and employed Sen's non-commutative union bound
\cite{S11}\ in the error analysis. The major question left open from this
effort is whether there exists an efficient implementation for a quantum
successive cancellation decoder.\footnote{By efficient, we mean that the decoder
should run in $O(N^2)$ time on a quantum computer (or even better $O(N\log N)$).
In computational complexity theory, ``efficient'' is often regarded to mean that
an algorithm runs in time polynomial in the input length. However, for the
demanding application of channel coding where delay should be minimized, we 
will consider a decoding algorithm to be ``efficient'' if it has a
near-linear running time.}$^{,}$\footnote{Note that the scheme from
Ref.~\cite{RDR11} \textit{does} provide an efficient $O(N \log N)$ implementation of a
quantum successive cancellation decoder, essentially because sending
classical states (encoded in some orthonormal basis)\ through a Pauli or
erasure channel induces an effectively classical channel at the output (such
that the resulting output states are commuting). One can then exploit a
coherent version of Arikan's successive cancellation decoder to decode quantum
information. Although this advance is useful, we would like to have an
efficient decoder for an \textit{arbitrary} quantum channel.}

In this paper, we detail our progress towards finding an
efficient quantum successive cancellation decoder. The decoder outlined here
is useful for decoding classical information transmitted over a channel with
classical inputs and quantum outputs (known as a \textquotedblleft
classical-quantum channel\textquotedblright\ or \textquotedblleft cq
channel\textquotedblright\ for short). Since the schemes for private classical
communication \cite{WR12a}\ and quantum communication \cite{WR12}\ rely on the
quantum successive cancellation decoder from Ref.~\cite{WG11}, our results
here have implications for these polar coding schemes as well. Our main result
can be stated succintly as follows:

\begin{claim}
\label{clm:main-result}In order to achieve the symmetric Holevo capacity
$I\left(  W\right)  $ of an arbitrary cq channel $W$, at most $N \left(  I\left(
W\right)  -I\left(  W_{\text{acc}}\right)  \right)  $ of the bits require a fully
collective strategy in order for them to be decoded reliably, while the other
$N\cdot I\left(  W_{\text{acc}}\right)  $ bits can be decoded efficiently and
reliably in time $O(  N^{2})  $ on a quantum computer using a
product strategy that amounts to coherent quantum processing of the outcomes
of product measurements.
\end{claim}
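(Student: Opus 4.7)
The plan is to construct a two-phase decoder for the $N$-fold polar-encoded cq channel, motivated by the decomposition $I(W) = I(W_{\text{acc}}) + \bigl(I(W) - I(W_{\text{acc}})\bigr)$, in which the first term is attainable by a product measurement strategy applied output-by-output and only the residual gap demands genuinely collective processing.

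For the first phase, I would coherently apply to each of the $N$ channel outputs $B_1,\ldots,B_N$ the accessible-information-achieving measurement $M$ via its Naimark dilation, realised as an isometry $U_M$ from $B_i$ into $B_i Y_i$ where $Y_i$ is an ancillary outcome register. Reading only the $Y$ registers would yield an induced purely classical channel $W_M$ with Shannon mutual information $I(W_M) = I(W_{\text{acc}})$. Arikan's classical polarization theorem then guarantees that a fraction $I(W_{\text{acc}})$ of the synthesized channels polarizes to nearly noiseless when decoded from the $Y$ registers alone; the corresponding $N \cdot I(W_{\text{acc}})$ bits can be decoded by Arikan's successive cancellation decoder, implemented as a reversible coherent subroutine on ancillas. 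The $O(N^2)$ running time follows from executing Arikan's $O(N \log N)$ recursion with an additional overhead to keep the computation unitary so as not to disturb the $B$ registers.

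For the second phase, covering the additional $N\bigl(I(W) - I(W_{\text{acc}})\bigr)$ synthesized channels that polarize under the full Holevo bound but remain noisy under $M$-measurement alone, I would invoke the quantum successive cancellation decoder of \cite{WG11}, but now acting on the enlarged systems $B_i Y_i$ rather than $B_i$. Because $U_M$ is an isometry, the joint $B_i Y_i$ state is a purification of the post-measurement ensemble and carries all of the original Holevo information, so the sequence of Wilde--Guha quantum hypothesis tests retains its success-probability guarantees and Sen's non-commutative union bound \cite{S11} controls the cumulative error across all residual bits. Coherence of the first phase is essential here: if the outcomes were read out classically, the $B$ registers would be damaged by the measurement back-action and the residual Holevo rate could not be recovered.

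The main obstacle I anticipate is showing that the partition of synthesized channels into ``product-decodable'' and ``collective-only'' groups is consistent -- specifically, that the set of indices polarizing to good under the $M$-induced classical channel is (asymptotically) contained in the set polarizing to good under the collective Holevo decoder, so that the two phases can be cleanly stitched into a single frozen-bit pattern. This requires comparing the Bhattacharyya/fidelity parameters of the classical polarization of $W_M$ with the quantum polarization parameters of $W$ at each synthesized index, and verifying that the error contributions of the two decoding phases add up to a vanishing total as $N \to \infty$. Once this nesting and the accompanying union bound are in place, the rate decomposition and complexity claim announced in the statement follow.
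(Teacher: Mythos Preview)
Your plan is essentially the paper's own argument, and the ``main obstacle'' you flag is precisely the lemma the paper isolates and dispatches. The nesting $\mathcal{G}_N(W_M,\beta)\subseteq\mathcal{G}_N(W,\beta)$ follows in one line from the Fuchs--Caves identity $F(\sigma_0,\sigma_1)=\min_\Lambda Z(\sigma_0,\sigma_1,\Lambda)$: for every synthesized index $i$, applying the product POVM $M^{\otimes N}$ to $B^N$ (and reading $U_1^{i-1}$ classically) is one particular measurement on the output of $W_N^{(i)}$, and the induced classical channel coincides with $(W_M)_N^{(i)}$ because the polar encoder acts only on inputs; hence
\[
F\bigl(W_N^{(i)}\bigr)=\min_{\Lambda}Z\bigl(W_N^{(i)},\Lambda\bigr)\le Z\bigl((W_M)_N^{(i)}\bigr),
\]
so $Z\bigl((W_M)_N^{(i)}\bigr)<2^{-N^\beta}$ forces $F\bigl(W_N^{(i)}\bigr)<2^{-N^\beta}$. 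With this containment, Sen's non-commutative union bound splices the product-decoded and collectively-decoded bits into a single error estimate exactly as you outline.

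Two minor deviations from the paper are worth noting. First, rather than carrying the dilated $B_iY_i$ registers into the collective phase, the paper \emph{uncomputes} the measurement isometry after extracting each product-decodable bit, so the collective Helstrom tests act on the original $B^N$; since your $U_M$ is an isometry the two pictures are unitarily equivalent, but the uncompute formulation makes it transparent that the projector implemented for a product-decodable bit is literally a coarse-graining of the Fuchs--Caves eigenbasis on $B^N$, to which Lemma~\ref{lem:FC-error} and the fidelity bound apply directly. Second, your $O(N^2)$ accounting is right but for a slightly different reason than stated: it is not reversibility overhead per se, but the fact that the uncompute after each bit prevents reuse of Arikan's intermediate likelihood ratios across bits, so one pays $O(N)$ per product-decodable bit via Arikan's ``first decoding algorithm'' rather than amortising to $O(N\log N)$ overall.
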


Although the main result of this paper might be considered modest in light of
reaching the full goal stated above, it still represents non-trivial progress
beyond prior research and towards answering the efficient polar decoding
question. Indeed, one might think that collective measurements would be necessary
in order to recover any of the bits of a message when communicating at the
Holevo capacity rate, as suggested by the original work of Holevo
\cite{Hol98}, Schumacher, and Westmoreland \cite{SW97} and follow-up efforts
on the pure-loss bosonic channel \cite{GGLMSY04,G11}. Even the recent sequential
decoding schemes suggest the same \cite{GLM10,S11} (see also \cite{WGTS11} for
the pure-loss bosonic case). As a side note, these sequential decoding schemes require a number of
measurements exponential in the number of channel uses---thus, even though the
physical realization of a single one of these measurements may be within
experimental reach \cite{OPJ12}, the fact that these schemes require an
exponential number of measurements excludes them from ever being practical.
The previous result in Ref.~\cite{WG11}\ suggests that only a linear number of
collective measurements are required to achieve the Holevo rate, and our work
here demonstrates that the number of collective measurements required is
at most $N\left(  I\left(
W\right)-I({W_\text{acc}})\right)  $.

This paper contains other results of interest. First, we prove that
collective Fuchs-Caves measurements (or quantum likelihood ratio measurements)
\cite{FC95} suffice for achieving the Holevo information rate with a cq polar
coding scheme. It was already known from Ref.~\cite{WG11} that a sequence of
Helstrom measurements suffices for achieving this rate, so this new result
just adds to the ways in which one can achieve the Holevo rate of
communication. We also plot the fraction of requisite collective measurements
as a function of the mean photon number of the signaling states for the case
of the pure-loss bosonic channel, in order to have a sense of the physical
requirements necessary for high-rate communication over this channel. As one
would expect, the fraction of collective measurements needed increases as the
mean photon number of the signaling states decreases---we expect this to
happen since the low photon-number regime is more quantum due to the non-orthogonality of the signaling states. Finally, we detail the
explicit form of a polar decoder that uses Helstrom measurements---we do this
for some simple two-, four-, and eight-bit polar codes. This final result should give a
sense of how one can specify these tests for larger blocklength polar codes.

The paper is organized as follows. The next section reviews background
material such as cq channels, the Holevo quantity, quantum fidelity, the
accessible information, and the classical fidelity (Bhattacharya parameter).
Section~\ref{sec:Fuchs-Caves} reviews the Fuchs-Caves measurement from
Ref.~\cite{FC95} and provides a useful upper bound on the error probability of
a hypothesis test that employs this measurement as the decision rule. We
review classical-quantum polar codes in Section~\ref{sec:polar-code-review}.
Our first simple observation is that collective Fuchs-Caves measurements
suffice for achieving the Holevo rate of communication
(Section~\ref{sec:baby-result}). Our main result, a justification for
Claim~\ref{clm:main-result}, appears in Section~\ref{sec:main-result}. In
Section~\ref{sec:bosonic}, we discuss the implications of
Claim~\ref{clm:main-result} for the pure-loss bosonic channel. Our last result
on the explicit form of the Helstrom decoder for two-, four-, and eight-bit polar codes
appears in Section~\ref{sec:last-result}. Finally, we conclude with a summary
of our results and suggest that the Schur transform might be helpful in
obtaining a general solution to the problem discussed in this paper.

\section{Preliminaries}

A classical-quantum channel (cq channel) has a classical input and a quantum output. In
this work, we only consider cq channels with binary inputs, written as%
\begin{equation}
W:x\rightarrow\rho_{x},
\end{equation}
where $W$ labels the channel, the input $x\in\left\{  0,1\right\}  $, and
$\rho_{x}$ is a density operator.
The symmetric Holevo information of this channel is%
\begin{equation}
I\left(  W\right)  \equiv H\left(  \left(  \rho_{0}+\rho_{1}\right)
/2\right)  -\left[  H\left(  \rho_{0}\right)  +H\left(  \rho_{1}\right)
\right]  /2,
\end{equation}
where $H\left(  \sigma\right)  \equiv-$Tr$\left\{  \sigma\log_{2}%
\sigma\right\}  $ is the von Neumann entropy. The symmetric Holevo information
gives one way to characterize the quality of a cq channel for data
transmission:\ it is equal to one if $\rho_{0}$ is orthogonal to $\rho_{1}$
and equal to zero if $\rho_{0}=\rho_{1}$. The quantum fidelity $F\left(
W\right)  $\ is another parameter that characterizes the quality of a cq
channel:%
\begin{equation}
F\left(  W\right)  \equiv F\left(  \rho_{0},\rho_{1}\right)  \equiv\left\Vert
\sqrt{\rho_{0}}\sqrt{\rho_{1}}\right\Vert _{1},
\end{equation}
where the trace norm $\left\Vert A\right\Vert _{1}$ of an operator $A$ is
defined as $\left\Vert A\right\Vert _{1}\equiv\ $Tr$\{  \sqrt{A^{\dag}%
A}\}  $ \cite{U73,J94}.\footnote{Note that the quantum fidelity
sometimes is defined as $\left\Vert \sqrt{\rho_{0}}\sqrt{\rho_{1}}\right\Vert
_{1}^{2}$ in order for it to have the interpretation as a probability. We
choose to remove the square in this work (as is often done) in order for it to
reduce to the classical Bhattacharya parameter when the states are just probability distributions.}
The quantum fidelity $F\left(  W\right)  $ is equal to one if $\rho_{0}%
=\rho_{1}$ and equal to zero if $\rho_{0}$ is orthogonal to $\rho_{1}$. We
have the following relationships between the symmetric Holevo information and
the quantum fidelity:%
\begin{align}
I\left(  W\right)   &  \approx1\Leftrightarrow F\left(  W\right)  \approx0,\\
I\left(  W\right)   &  \approx0\Leftrightarrow F\left(  W\right)  \approx1,
\end{align}
which are made precise in Proposition~1 of Ref.~\cite{WG11}.

From any cq channel, it is possible to induce a purely classical channel
$p_{Y|X}\left(  y|x\right)  $\ by having the receiver perform a quantum
measurement at its output:%
\begin{equation}
p_{Y|X}\left(  y|x\right)  \equiv\text{Tr}\left\{  \Lambda_{y}\rho
_{x}\right\}  ,
\end{equation}
where $\Lambda\equiv\left\{  \Lambda_{y}\right\}  $ is a positive
operator-valued measure (POVM), a set of operators satisfying $\Lambda_{y}%
\geq0$ and $\sum_{y}\Lambda_{y}=I$. Letting $X$ be a uniform Bernoulli random
variable and letting $Y$ be the random variable corresponding to the outcome
of the measurement, we can define the symmetric mutual information of the
induced channel as%
\begin{equation}
I\left(  W,\Lambda\right)  \equiv I\left(  X;Y\right)  \equiv H\left(
X\right)  +H\left(  Y\right)  -H\left(  XY\right)  ,
\end{equation}
where $H$ is the Shannon entropy of these random variables. The classical
Bhattarcharya parameter is the statistical overlap between the resulting
distributions:%
\begin{equation}
Z\left(  W,\Lambda\right)  \equiv\sum_{y}\sqrt{p_{Y|X}\left(  y|0\right)\
p_{Y|X}\left(  y|1\right)  }.
\end{equation}
If one were to encode the conditional distribution $p_{Y|X}\left(  y|x\right)
$ along the diagonal of a matrix (so that it becomes a density operator), then
it is clear that the symmetric Holevo information and fidelity of the
resulting \textquotedblleft cq channel\textquotedblright\ are equal to the
symmetric mutual information and classical Bhattacharya parameter, respectively.

The symmetric accessible information is equal to the optimized symmetric
mutual information:%
\begin{equation}
I\left(  W_{\text{acc}}\right)  \equiv\max_{\left\{  \Lambda_{y}\right\}
}I\left(  W,\Lambda\right)  ,
\end{equation}
where the optimization is with respect to all POVMs $\Lambda=\left\{
\Lambda_{y}\right\}  $. As a consequence of the well-known Holevo bound, 
the symmetric Holevo information
is an upper bound to the symmetric accessible information \cite{Holevo73}:%
\begin{equation}
I\left(  W_{\text{acc}}\right)  \leq I\left(  W\right)  .
\end{equation}

\section{The Fuchs-Caves Measurement}

\label{sec:Fuchs-Caves}

Rather than choosing a measurement to optimize the symmetric mutual
information, one could also choose a measurement in such a way that it
minimizes the statistical overlap between the resulting distributions
$p_{Y|X}\left(  y|0\right)  $ and $p_{Y|X}\left(  y|1\right)  $ \cite{FC95}.
We call such a measurement a \textquotedblleft Fuchs-Caves\textquotedblright%
\ measurement since these authors proved that the minimum statistical overlap
is equal to the quantum fidelity:%
\begin{equation}
\min_{\left\{  \Lambda_{y}\right\}  }Z\left(  W,\Lambda\right)  =F\left(
W\right)  .\label{eq:min-statistical-overlap}%
\end{equation}
Furthermore, they gave an explicit form for the measurement that achieves the
minimum and interpreted it as a kind of \textquotedblleft quantum likelihood
ratio.\textquotedblright\ Indeed, the measurement that achieves the minimum in
(\ref{eq:min-statistical-overlap}) corresponds to a measurement in the
eigenbasis of the following Hermitian operator:%
\begin{equation}
\rho_{0}\ \#\ \rho_{1}^{-1}\equiv\rho_{1}^{-1/2}\sqrt{\rho_{1}^{1/2}\rho
_{0}\rho_{1}^{1/2}}\rho_{1}^{-1/2}.
\end{equation}
Diagonalizing $\rho_{0}\ \#\ \rho_{1}^{-1}$ as%
\begin{equation}
\rho_{0}\ \#\ \rho_{1}^{-1}=\sum_{y}\lambda_{y}\left\vert y\right\rangle
\left\langle y\right\vert ,\label{eq:q-likelihood-decomp}%
\end{equation}
Fuchs and Caves observed that the eigenvalues of
$\rho_{0}\ \#\ \rho_{1}^{-1}$ take the following form:%
\begin{equation}
\lambda_{y}=\Bigg(  \frac{\left\langle y\right\vert \rho_{0}\left\vert
y\right\rangle }{\left\langle y\right\vert \rho_{1}\left\vert y\right\rangle
}\Bigg)  ^{1/2},
\end{equation}
furthermore suggesting that this measurement is a good quantum analog of a
likelihood ratio. In addition, Fuchs and Caves also observed that the operator
\begin{equation}
\rho_{1}\ \#\ \rho_{0}^{-1}\equiv\rho_{0}^{-1/2}\sqrt{\rho_{0}^{1/2}\rho
_{1}\rho_{0}^{1/2}}\rho_{0}^{-1/2}
\end{equation}
commutes with and is the inverse of $\rho_{0}\ \#\ \rho_{1}^{-1}$. Thus,
the eigenvectors of $\rho_{1}\ \#\ \rho_{0}^{-1}$ are the same as those of
$\rho_{0}\ \#\ \rho_{1}^{-1}$ and its eigenvalues are the reciprocals of those
of $\rho_{0}\ \#\ \rho_{1}^{-1}$.

\begin{lemma}
\label{lem:FC-error}When using the Fuchs-Caves measurement to distinguish
$\rho_{0}$ from $\rho_{1}$, we have following upper bound on the probability
of error $p_{e}\left(  W\right)  $\ in terms of the quantum fidelity~$F\left(
W\right)  $:%
\begin{equation}
p_{e}\left(  W\right)  \leq\tfrac{1}{2}F\left(  W\right)  .
\label{eq:err-prob-upper-bound}%
\end{equation}

\end{lemma}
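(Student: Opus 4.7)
The plan is to reduce the quantum hypothesis test to a classical one using the Fuchs-Caves measurement, and then apply a standard Bhattacharyya bound. Once the receiver measures in the eigenbasis of $\rho_0 \# \rho_1^{-1}$, we obtain a classical outcome $Y$ with conditional distributions $p_{Y|X}(y|x)=\langle y|\rho_x|y\rangle$. Because the eigenvalue associated with outcome $y$ is $\lambda_y = (p_{Y|X}(y|0)/p_{Y|X}(y|1))^{1/2}$, the rule ``declare $0$ if $\lambda_y \geq 1$ and declare $1$ otherwise'' is exactly the classical maximum-likelihood rule on the induced distributions.

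For a uniform prior on $X$, the error probability of ML decoding on a binary classical channel is the standard expression
\begin{equation}
p_e(W) = \tfrac{1}{2}\sum_y \min\left\{p_{Y|X}(y|0),\,p_{Y|X}(y|1)\right\}.
\end{equation}
I would then apply the elementary bound $\min\{a,b\}\leq \sqrt{ab}$ for nonnegative $a,b$ to each term, giving
\begin{equation}
p_e(W) \leq \tfrac{1}{2}\sum_y \sqrt{p_{Y|X}(y|0)\,p_{Y|X}(y|1)} = \tfrac{1}{2}\,Z(W,\Lambda_{\text{FC}}),
\end{equation}
where $\Lambda_{\text{FC}}$ denotes the Fuchs-Caves measurement.

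The final step invokes the Fuchs-Caves identity in~(\ref{eq:min-statistical-overlap}):\ their measurement achieves the minimum statistical overlap, and this minimum is precisely the quantum fidelity $F(W)$. Substituting $Z(W,\Lambda_{\text{FC}})=F(W)$ yields the claimed bound $p_e(W)\leq \tfrac{1}{2}F(W)$. There is no real obstacle here; the only subtlety is ensuring that the decision rule paired with the Fuchs-Caves measurement is taken to be the ML rule (equivalently, thresholding the quantum likelihood ratio $\lambda_y$ at one), since without that choice the error probability is not controlled by the Bhattacharyya parameter.
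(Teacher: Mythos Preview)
Your proof is correct and follows essentially the same route as the paper. The paper writes out the error probability as $2p_e=\sum_{y:\lambda_y\geq 1}\langle y|\rho_1|y\rangle+\sum_{y:\lambda_y<1}\langle y|\rho_0|y\rangle$ and then bounds each term by the geometric mean, which is exactly your $\min\{a,b\}\leq\sqrt{ab}$ step applied to $p_{Y|X}(y|0)$ and $p_{Y|X}(y|1)$; both then finish by invoking the Fuchs--Caves identity $Z(W,\Lambda_{\text{FC}})=F(W)$.
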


\begin{proof}
After performing the measurement specified by (\ref{eq:q-likelihood-decomp}),
the decision rule is as follows:%
\begin{align}
\text{decide~}\rho_{0}\text{ if }\lambda_{y} &  \geq1,\\
\text{decide~}\rho_{1}\text{ if }\lambda_{y} &  <1,
\end{align}
which corresponds to the projectors%
\begin{align}
\Pi_{0} &  \equiv\sum_{y\ :\ \lambda_{y}\geq1}\left\vert y\right\rangle
\left\langle y\right\vert ,\label{eq:FC-project0}\\
\Pi_{1} &  =\sum_{y\ :\ \lambda_{y}<1}\left\vert y\right\rangle \left\langle
y\right\vert .\label{eq:FC-project1}%
\end{align}
It is then easy to prove the bound in (\ref{eq:err-prob-upper-bound}):%
\begin{align}
2\ p_{e}\left(  W\right)   &  =\text{Tr}\left\{  \Pi_{0}\rho_{1}\right\}
+\text{Tr}\left\{  \Pi_{1}\rho_{0}\right\}  \label{eq:FC-analysis-first}\\
&  =\sum_{y\ :\ \lambda_{y}\geq1}\left\langle y\right\vert \rho_{1}\left\vert
y\right\rangle +\sum_{y\ :\ \lambda_{y}<1}\left\langle y\right\vert \rho
_{0}\left\vert y\right\rangle \\
&  =\sum_{y\ :\ \lambda_{y}\geq1}\left\langle y\right\vert \rho_{1}\left\vert
y\right\rangle ^{1/2}\left\langle y\right\vert \rho_{1}\left\vert
y\right\rangle ^{1/2}+\sum_{y\ :\ \lambda_{y}<1}\left\langle y\right\vert
\rho_{0}\left\vert y\right\rangle ^{1/2}\left\langle y\right\vert \rho
_{0}\left\vert y\right\rangle ^{1/2}\\
&  \leq\sum_{y\ :\ \lambda_{y}\geq1}\left\langle y\right\vert \rho
_{1}\left\vert y\right\rangle ^{1/2}\left\langle y\right\vert \rho
_{0}\left\vert y\right\rangle ^{1/2}+\sum_{y\ :\ \lambda_{y}<1}\left\langle
y\right\vert \rho_{0}\left\vert y\right\rangle ^{1/2}\left\langle y\right\vert
\rho_{1}\left\vert y\right\rangle ^{1/2}\\
&  =\sum_{y}\left\langle y\right\vert \rho_{1}\left\vert y\right\rangle
^{1/2}\left\langle y\right\vert \rho_{0}\left\vert y\right\rangle ^{1/2}\\
&  =F\left(  \rho_{0},\rho_{1}\right)  \label{eq:FC-analysis-last}%
\end{align}
where the last equality follows from (\ref{eq:min-statistical-overlap}).
\end{proof}

\section{Decoding Classical-Quantum Polar Codes}

\subsection{Review}

\label{sec:polar-code-review}

Ref.~\cite{WG11} demonstrated how to construct synthesized versions of $W$, by
channel combining and splitting \cite{A09}. The synthesized channels
$W_{N}^{\left(  i\right)  }$ are of the following form:%
\begin{align}
W_{N}^{\left(  i\right)  } &  :\,u_{i}\rightarrow\rho_{\left(  i\right)
,u_{i}}^{U_{1}^{i-1}B^{N}},\label{eq:split-channels}\\
\rho_{\left(  i\right)  ,u_{i}}^{U_{1}^{i-1}B^{N}} &  \equiv\sum_{u_{1}^{i-1}%
}\frac{1}{2^{i-1}}\left\vert u_{1}^{i-1}\right\rangle \left\langle u_{1}%
^{i-1}\right\vert ^{U_{1}^{i-1}}\otimes\overline{\rho}_{u_{1}^{i}}^{B^{N}},\\
\overline{\rho}_{u_{1}^{i}}^{B^{N}} &  \equiv\sum_{u_{i+1}^{N}}\frac
{1}{2^{N-i}}\rho_{u^{N}G_{N}}^{B^{N}},\,\,\,\,\,\,\,\,\rho_{x^{N}}^{B^{N}%
}\equiv\rho_{x_{1}}^{B_{1}}\otimes\cdots\otimes\rho_{x_{N}}^{B_{N}},
\end{align}
where $G_{N}$ is Arikan's encoding circuit matrix built from classical
CNOT\ and permutation gates. The registers labeled by $U$ are classical registers
containing the bits $u_1$ through $u_{i-1}$, and the
 registers labeled by $B$ contain the channel outputs.
 If the channel is classical, then these states
are diagonal in the computational basis, and the above states correspond to
the distributions for the synthesized channels \cite{A09}. The interpretation
of $W_{N}^{\left(  i\right)  }$ is that it is the channel \textquotedblleft
seen\textquotedblright\ by the input $u_{i}$ if the previous bits $u_{1}%
^{i-1}$ are available and if the future bits $u_{i+1}^{N}$ are randomized.
This motivates the development of a quantum successive cancellation decoder
\cite{WG11}\ that attempts to distinguish $u_{i}=0$ from $u_{i}=1$ by
adaptively exploiting the results of previous measurements and quantum
hypothesis tests for each bit decision.

The synthesized channels $W_{N}^{\left(  i\right)  }$ polarize, in the sense
that some become nearly perfect for classical data transmission while others
become nearly useless. To prove this result, one can model the channel
splitting and combining process as a random birth process \cite{A09,WG11}, and
then demonstrate that the induced random birth processes corresponding to
the channel parameters $I(W_{N}^{\left(  i\right)  })$ and $F(W_{N}^{\left(
i\right)  })$ are martingales that converge almost surely to zero-one valued
random variables in the limit of many recursions. The following theorem
characterizes the rate with which the channel polarization effect takes hold
\cite{AT09,WG11}, and it is useful in proving statements about the performance
of polar codes for cq channels:

\begin{theorem}
\label{thm:fraction-good}Given a binary input cq channel $W$ and any
$\beta<1/2$, it holds that%
\begin{equation}
\lim_{n\rightarrow\infty}\Pr_{J}\{F(W_{2^{n}}^{\left(  J\right)
})<2^{-2^{n\beta}}\}=I\left(  W\right)  ,
\end{equation}
where $n$ indicates the level of recursion for the encoding, $W_{2^{n}%
}^{\left(  J\right)  }$ is a random variable characterizing the $J^{\text{th}%
}$ split channel, and $F(W_{2^{n}}^{\left(  J\right)  })$ is the fidelity of
that channel.
\end{theorem}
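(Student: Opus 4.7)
The plan is to adapt the Arikan--Telatar argument \cite{AT09} to the cq setting, following the lines laid out in Ref.~\cite{WG11}. The underlying object is a random birth process: let $B_{1},B_{2},\ldots$ be an i.i.d.\ sequence of uniform Bernoulli variables (indexing at each level whether one goes to the ``minus'' channel or the ``plus'' channel in the recursive construction), and define random channels $W_{n}$ with $W_{0}=W$ and $W_{n+1}=W_{n}^{-}$ if $B_{n+1}=0$, $W_{n+1}=W_{n}^{+}$ if $B_{n+1}=1$. Set $I_{n}\equiv I(W_{n})$ and $F_{n}\equiv F(W_{n})$; these are precisely the random variables attached to a uniformly random coordinate $J\in\{1,\ldots,2^{n}\}$.

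First I would recall the qualitative polarization result: $I_{n}$ is a bounded $[0,1]$-valued martingale (using the conservation identity $I(W^{-})+I(W^{+})=2I(W)$ for the symmetric Holevo information under the polar transform, established in Ref.~\cite{WG11}), so it converges almost surely to a limit $I_{\infty}$. Using the equivalence $I(W)\approx 0\Leftrightarrow F(W)\approx 1$ and $I(W)\approx 1\Leftrightarrow F(W)\approx 0$ from Proposition~1 of Ref.~\cite{WG11}, the limit must be $\{0,1\}$-valued, and its expectation is $I(W)$ by the martingale property. So $\Pr\{I_{\infty}=1\}=I(W)$, and equivalently $\Pr\{F_{n}\to 0\}=I(W)$.

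The heart of the rate statement is then the following pair of one-step fidelity recursions, both proved in Ref.~\cite{WG11}:
\begin{align}
F(W^{-}) &\leq 2F(W)-F(W)^{2}\leq 2F(W),\\
F(W^{+}) &\leq F(W)^{2}.
\end{align}
Thus along any realization of the birth process, $F_{n+1}\leq 2F_{n}$ when $B_{n+1}=0$ and $F_{n+1}\leq F_{n}^{2}$ when $B_{n+1}=1$. Conditional on the event $\{F_{n}\to 0\}$ one has $F_{n}<1/2$ eventually, and once $F_{n}$ is small the ``squaring'' steps dominate. Applying Arikan and Telatar's Bernoulli argument \cite{AT09} verbatim: for any $\beta<1/2$, the strong law of large numbers guarantees that the fraction of squaring steps among the first $n$ indices exceeds $\beta'$ for some $\beta<\beta'<1/2$ with probability tending to one, which yields $F_{n}\leq 2^{-2^{n\beta}}$ eventually on the event $\{F_{n}\to 0\}$. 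Combining with the previous paragraph gives the claimed probability $I(W)$.

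The main obstacle is entirely the quantum content of the two fidelity recursions above; everything else is a stochastic-process argument identical to the classical case. In particular, the bound $F(W^{-})\leq 2F(W)-F(W)^{2}$ is not immediate from the classical Bhattacharyya analysis because one must handle non-commuting $\rho_{0},\rho_{1}$ and control $\|\sqrt{\sigma_{0}}\sqrt{\sigma_{1}}\|_{1}$ where the $\sigma_{x}$ are the coherently-combined channel outputs; the argument in Ref.~\cite{WG11} leverages operator monotonicity of the square root together with a variational characterization of the trace-norm fidelity, and one would invoke those lemmas here rather than reprove them. Once those two inequalities are in hand, the theorem reduces to the established classical rate-of-polarization analysis.
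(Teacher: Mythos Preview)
Your proposal is correct and follows exactly the route the paper itself points to: the paper does not give an in-text proof of this theorem but simply cites \cite{AT09,WG11}, noting that the random birth processes for $I(W_{N}^{(i)})$ and $F(W_{N}^{(i)})$ are (super)martingales converging to zero-one random variables, with the rate coming from the Arikan--Telatar analysis applied to the fidelity recursions $F(W^{+})=F(W)^{2}$ and $F(W^{-})\leq 2F(W)-F(W)^{2}$ established in Ref.~\cite{WG11}. One small remark: your justification that $I_{\infty}\in\{0,1\}$ via the $I\leftrightarrow F$ equivalence alone is a bit elliptical; the cleaner argument (as in \cite{WG11}) is that $F_{n}$ is a bounded supermartingale, hence converges, and since $F_{n+1}=F_{n}^{2}$ with probability $1/2$ the limit must satisfy $F_{\infty}=F_{\infty}^{2}$ almost surely.
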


Assuming knowledge of the identities of the good and bad channels, one can then construct a
coding scheme based on the channel polarization effect, by dividing the
synthesized channels according to the following polar coding rule:%
\begin{align}
\mathcal{G}_{N}\left(  W,\beta\right)   &  \equiv\big\{i\in\left[  N\right]
:F(W_{N}^{\left(  i\right)  })<2^{-N^{\beta}}%
\big\},\label{eq:polar-coding-rule}\\
\mathcal{B}_{N}\left(  W,\beta\right)   &  \equiv\left[  N\right]
\setminus\mathcal{G}_{N}\left(  W,\beta\right)  ,
\end{align}
so that $\mathcal{G}_{N}\left(  W,\beta\right)  $ is the set of
\textquotedblleft good\textquotedblright\ channels and $\mathcal{B}_{N}\left(
W,\beta\right)  $ is the set of \textquotedblleft bad\textquotedblright%
\ channels. The sender then transmits the information bits through the good
channels and \textquotedblleft frozen\textquotedblright\ bits through the bad
ones. A helpful assumption for error analysis is that the frozen bits are
chosen uniformly at random and known to both the sender and receiver.

One of the important advances in Ref.~\cite{WG11} was to establish that a
quantum successive cancellation decoder performs well for polar coding over
classical-quantum channels with equiprobable inputs. Corresponding to the
split channels $W_{N}^{\left(  i\right)  }$ in (\ref{eq:split-channels}) are
the following projectors that attempt to decide whether the input of the
$i^{\text{th}}$ split channel is zero or one:%
\begin{align}
\Pi_{\left(  i\right)  ,0}^{U_{1}^{i-1}B^{N}} &  \equiv\left\{  \rho_{\left(
i\right)  ,0}^{U_{1}^{i-1}B^{N}}-\rho_{\left(  i\right)  ,1}^{U_{1}^{i-1}%
B^{N}}\geq0\right\}  ,\label{eq:Helstrom-decoder-1}\\
\Pi_{\left(  i\right)  ,1}^{U_{1}^{i-1}B^{N}} &  \equiv I-\Pi_{\left(
i\right)  ,0}^{U_{1}^{i-1}B^{N}},\label{eq:Helstrom-decoder-2}%
\end{align}
where $\left\{  B\geq0\right\}  $ denotes the projector onto the positive
eigenspace of a Hermitian operator~$B$. After some calculations, one
readily sees that%
\begin{equation}
\Pi_{\left(  i\right)  ,0}^{U_{1}^{i-1}B^{N}}=\sum_{u_{1}^{i-1}}\left\vert
u_{1}^{i-1}\right\rangle \left\langle u_{1}^{i-1}\right\vert ^{U_{1}^{i-1}%
}\otimes\Pi_{\left(  i\right)  ,u_{1}^{i-1}0}^{B^{N}}%
,\label{eq:projectors-expanded-1}%
\end{equation}
where
\begin{align}
\Pi_{\left(  i\right)  ,1}^{U_{1}^{i-1}B^{N}} &  =I-\Pi_{\left(  i\right)
,0}^{U_{1}^{i-1}B^{N}},\label{eq:projectors-expanded-2}\\
\Pi_{\left(  i\right)  ,u_{1}^{i-1}0}^{B^{N}} &  \equiv\{\overline{\rho
}_{u_{1}^{i-1}0}^{B^{N}}-\overline{\rho}_{u_{1}^{i-1}1}^{B^{N}}\geq0\},\\
\Pi_{\left(  i\right)  ,u_{1}^{i-1}1}^{B^{N}} &  \equiv I-\Pi_{\left(
i\right)  ,u_{1}^{i-1}0}^{B^{N}}.\label{eq:projectors-expanded-3}%
\end{align}
The observations above lead to a decoding rule for a successive cancellation
decoder similar to Arikan's~\cite{A09}:%
\begin{equation}
\hat{u}_{i}=\left\{
\begin{array}
[c]{cc}%
u_{i} & \text{if }i\in\mathcal{A}^{c}\\
h\left(  \hat{u}_{1}^{i-1}\right)   & \text{if }i\in\mathcal{A}%
\end{array}
\right.  ,
\end{equation}
where $h\left(  \hat{u}_{1}^{i-1}\right)  $ is the outcome of the 
$i^{\text{th}}$ collective measurement:
\begin{equation}
\{\Pi_{\left(  i\right)  ,\hat{u}_{1}^{i-1}0}^{B^{N}},\,\Pi_{\left(  i\right)
,\hat{u}_{1}^{i-1}1}^{B^{N}}\}
\end{equation}
on the codeword received at the
channel output (after $i-1$ measurements have already been performed). The 
set $\mathcal{A}$ labels the information bits.
The measurement device outputs \textquotedblleft%
0\textquotedblright\ if the outcome $\Pi_{\left(  i\right)  ,\hat{u}_{1}%
^{i-1}0}^{B^{N}}$ occurs and it outputs \textquotedblleft1\textquotedblright%
\ otherwise. (Note that we can set $\Pi_{\left(  i\right)  ,\hat{u}_{1}%
^{i-1}u_{i}}^{B^{N}}=I$ if the bit $u_{i}$ is a frozen bit.) The above
sequence of measurements for the whole bit stream $u^{N}$ corresponds to a
positive operator-valued measure (POVM)~$\left\{  \Lambda_{u^{N}}\right\}  $
where%
\begin{equation}
\Lambda_{u^{N}}\equiv\Pi_{\left(  1\right)  ,u_{1}}^{B^{N}}\cdots\Pi_{\left(
i\right)  ,u_{1}^{i-1}u_{i}}^{B^{N}}\cdots\Pi_{\left(  N\right)  ,u_{1}%
^{N-1}u_{N}}^{B^{N}}\cdots\Pi_{\left(  i\right)  ,u_{1}^{i-1}u_{i}}^{B^{N}%
}\cdots\Pi_{\left(  1\right)  ,u_{1}}^{B^{N}},
\end{equation}
and $\sum_{u_{\mathcal{A}}}\Lambda_{u^{N}}=I^{B^{N}}$. The probability of
error $P_{e}\left(  N,K,\mathcal{A},u_{\mathcal{A}^{c}}\right)  $\ for code
length $N$, number~$K$ of information bits, set $\mathcal{A}$ of information
bits, and choice $u_{\mathcal{A}^{c}}$ for the frozen bits is%
\begin{equation} \label{eqn:avg.error}
P_{e}\left(  N,K,\mathcal{A},u_{\mathcal{A}^{c}}\right)  =1-\frac{1}{2^{K}%
}\sum_{u_{\mathcal{A}}}\text{Tr}\left\{  \Lambda_{u^{N}}\rho_{u^{N}}\right\}
,
\end{equation}
where we are assuming a particular choice of the bits $u_{\mathcal{A}^{c}}$ in
the sequence of projectors $\Pi_{\left(  N\right)  ,u_{1}^{N-1}u_{N}}^{B^{N}}$
$\cdots$ $\Pi_{\left(  i\right)  ,u_{1}^{i-1}u_{i}}^{B^{N}}$ $\cdots$
$\Pi_{\left(  1\right)  ,u_{1}}^{B^{N}}$ and setting $\Pi_{\left(  i\right)
,u_{1}^{i-1}u_{i}}^{B^{N}}=I$ if $u_{i}$ is a frozen bit. The formula also assumes 
that  the sender transmits the information 
sequence $u_{\mathcal{A}}$ with
uniform probability $2^{-K}$. The probability of error averaged over all
choices of the frozen bits is then
\begin{equation}
P_{e}\left(  N,K,\mathcal{A}\right)  =\frac{1}{2^{N-K}}\sum_{u_{\mathcal{A}%
^{c}}}P_{e}\left(  N,K,\mathcal{A},u_{\mathcal{A}^{c}}\right) .
\end{equation}
The following proposition from Ref.~\cite{WG11} determines an upper bound on
the average ensemble performance of polar codes with a quantum successive
cancellation decoder:

\begin{proposition}
\label{prop:error-bound}For any classical-quantum channel $W$ with binary
inputs and quantum outputs and any choice of $\left(  N,K,\mathcal{A}\right)
$, the following bound holds%
\begin{equation}
P_{e}\left(  N,K,\mathcal{A}\right)  \leq2\sqrt{\sum_{i\in\mathcal{A}}\tfrac
{1}{2}F(W_{N}^{\left(  i\right)  })}. \label{eq:prop-error-bound}%
\end{equation}

\end{proposition}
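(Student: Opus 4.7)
The plan is to combine the Helstrom error bound implied by Lemma~\ref{lem:FC-error} with Sen's non-commutative union bound~\cite{S11}. First, averaging Eq.~\eqref{eqn:avg.error} uniformly over the frozen bits gives the compact expression $P_{e}(N,K,\mathcal{A}) = 1 - 2^{-N} \sum_{u^{N}} \mathrm{Tr}\{\Lambda_{u^{N}} \rho_{u^{N}}^{B^{N}}\}$, and by cyclicity of the trace I can rewrite each term in the sandwich form $\mathrm{Tr}\{\Pi_{(N)} \cdots \Pi_{(1)} \rho_{u^{N}}^{B^{N}} \Pi_{(1)} \cdots \Pi_{(N)}\}$, where $\Pi_{(i)}$ abbreviates $\Pi_{(i),u_{1}^{i-1}u_{i}}^{B^{N}}$. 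This sandwich form is the standard input to Sen's inequality.

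Next I would lift the whole computation to a single classical-quantum block state on which Sen's bound applies uniformly. Introduce
$$\rho^{U^{N} B^{N}} \equiv \sum_{u^{N}} 2^{-N} \, |u^{N}\rangle\langle u^{N}|^{U^{N}} \otimes \rho_{u^{N}}^{B^{N}},$$
together with the controlled projectors
$$\tilde{\Pi}_{(i)} \equiv \sum_{u_{1}^{i}} |u_{1}^{i}\rangle\langle u_{1}^{i}|^{U_{1}^{i}} \otimes \Pi_{(i),u_{1}^{i-1}u_{i}}^{B^{N}}.$$
Since each $\tilde{\Pi}_{(i)}$ commutes with the $U^{N}$ registers of $\rho^{U^{N}B^{N}}$, a branch-by-branch check gives $2^{-N} \sum_{u^{N}} \mathrm{Tr}\{\Lambda_{u^{N}} \rho_{u^{N}}^{B^{N}}\} = \mathrm{Tr}\{\tilde{\Pi}_{(N)} \cdots \tilde{\Pi}_{(1)} \rho^{U^{N}B^{N}} \tilde{\Pi}_{(1)} \cdots \tilde{\Pi}_{(N)}\}$. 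Sen's non-commutative union bound then yields
$$P_{e}(N,K,\mathcal{A}) \leq 2\sqrt{\sum_{i} \mathrm{Tr}\{(I - \tilde{\Pi}_{(i)}) \rho^{U^{N}B^{N}}\}}.$$

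Finally I would identify each individual summand as a Helstrom error probability. Using the definition of $\rho_{(i),u_{i}}^{U_{1}^{i-1}B^{N}}$ from \eqref{eq:split-channels}, a short calculation shows that $\mathrm{Tr}\{(I - \tilde{\Pi}_{(i)}) \rho^{U^{N}B^{N}}\}$ equals $\tfrac{1}{2} \mathrm{Tr}\{\Pi_{(i),1}^{U_{1}^{i-1}B^{N}} \rho_{(i),0}^{U_{1}^{i-1}B^{N}}\} + \tfrac{1}{2} \mathrm{Tr}\{\Pi_{(i),0}^{U_{1}^{i-1}B^{N}} \rho_{(i),1}^{U_{1}^{i-1}B^{N}}\}$, which is exactly the Helstrom error probability $p_{e}^{H}(W_{N}^{(i)})$ for equal priors. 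Since the Helstrom measurement minimizes the error probability, $p_{e}^{H}(W_{N}^{(i)}) \leq p_{e}^{FC}(W_{N}^{(i)})$, and Lemma~\ref{lem:FC-error} then gives $p_{e}^{H}(W_{N}^{(i)}) \leq \tfrac{1}{2} F(W_{N}^{(i)})$. For $i \notin \mathcal{A}$ the projectors are identities and contribute nothing, so the sum collapses to $i \in \mathcal{A}$, yielding \eqref{eq:prop-error-bound}. The main obstacle is the bookkeeping in the lifting step: one must verify that the controlled-projector sequence on the cq block state faithfully reproduces the successive cancellation decoding probability branch-by-branch, and that the single-step error $\mathrm{Tr}\{(I - \tilde{\Pi}_{(i)}) \rho^{U^{N}B^{N}}\}$ coincides cleanly with the Helstrom error for distinguishing $\rho_{(i),0}$ from $\rho_{(i),1}$ rather than some larger-space analogue---once these identifications are in place, the bound follows mechanically from Sen's inequality and Lemma~\ref{lem:FC-error}.
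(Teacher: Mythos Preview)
Your proposal is correct and follows essentially the same approach as the paper: apply Sen's non-commutative union bound to reduce the block error to a sum of single-step binary-hypothesis-test errors, then bound each of those by $\tfrac{1}{2}F(W_N^{(i)})$. The only cosmetic difference is that the paper invokes Lemma~3.2 of Ref.~\cite{H06} directly for the Helstrom--fidelity inequality, whereas you obtain the same bound by routing through Lemma~\ref{lem:FC-error} together with the optimality of the Helstrom measurement; the two arguments yield the identical constant and the same final inequality.
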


The proposition is proved by exploiting Sen's non-commutative
union bound~\cite{S11} and Lemma 3.2 of Ref.~\cite{H06}
(which upper bounds the probability of error in a binary quantum hypothesis
test by the fidelity between the test states). 
%
 The bound in (\ref{eq:prop-error-bound}) applies provided the
 sender chooses the information bits $U_{\mathcal{A}}$ from a uniform
 distribution. Thus, by choosing the channels over which the sender transmits
 the information bits to be in $\mathcal{A}$ and those over which she transmits
 agreed-upon frozen bits to be in $\mathcal{A}^{c}$, we obtain that the
 probability of decoding error satisfies
 $\Pr\{\widehat{U}_{\mathcal{A}}\neq U_{\mathcal{A}}\}=o(2^{-\frac{1}{2}%
 N^{\beta}})$,
 as long as the code rate obeys
 $R=K/N<I(W)$.

A final point that will be useful is that Ref.~\cite{WG11} also proved that measurements consisting
of the projections%
\begin{equation}
\left\{  \sqrt{\rho_{\left(  i\right)  ,0}^{U_{1}^{i-1}B^{N}}}-\sqrt
{\rho_{\left(  i\right)  ,1}^{U_{1}^{i-1}B^{N}}}\geq0\right\}  ,
\label{eq:sq-root-Helstrom}%
\end{equation}
rather than those in (\ref{eq:Helstrom-decoder-1}%
)-(\ref{eq:Helstrom-decoder-2}), also achieve the performance stated in
Proposition~\ref{prop:error-bound}.

\subsection{Collective Fuchs-Caves Measurements Achieve the Holevo Rate}

\label{sec:baby-result}

Our first observation is rather simple, just being that collective Fuchs-Caves
measurements can also achieve the performance stated in
Proposition~\ref{prop:error-bound}. This result follows from
Lemma~\ref{lem:FC-error}'s bound on the
error probability of a Fuchs-Caves measurement 
 and by performing an error analysis similar to that in the proof of
Proposition~4 of Ref.~\cite{WG11}\ given in Section~V of that paper. The
explicit form of a Fuchs-Caves quantum successive cancellation decoder is
given by projectors of the form in (\ref{eq:projectors-expanded-1}%
)-(\ref{eq:projectors-expanded-3}), with the Helstrom tests replaced by
Fuchs-Caves projectors as given in (\ref{eq:FC-project0}%
)-(\ref{eq:FC-project1}).

This result also demonstrates that there are a variety of decoding measurements
that one can exploit for achieving the Holevo information rate. However, the
quantum successive cancellation decoder consisting of Helstrom measurements
should outperform either the measurements in (\ref{eq:sq-root-Helstrom}) or
the Fuchs-Caves measurements when considering finite blocklength performance
because the Helstrom measurement is the optimal test for distinguishing two
quantum states.

\subsection{Main Result}

\label{sec:main-result}

Our main observation is a bit more subtle than the above, but it is still
elementary. Nevertheless, this observation has nontrivial consequences and
represents a step beyond the insights in prior work regarding decoding of
classical information sent over quantum channels
\cite{Hol98,SW97,GGLMSY04,G11,GLM10,S11,WGTS11,WG11}.

We begin by considering the \textquotedblleft Fuchs-Caves\textquotedblright%
\ classical channel $W_{\text{FC}}$\ induced from $W$ by performing the
Fuchs-Caves measurement on every channel output:%
\begin{equation}
W_{\text{FC}}:x\rightarrow p_{Y|X}\left(  y|x\right)  =\left\langle
y\right\vert \rho_{x}\left\vert y\right\rangle ,
\end{equation}
where the orthonormal basis $\left\{  \left\vert y\right\rangle \right\}  $ is
the same as that in (\ref{eq:q-likelihood-decomp}). The specification of the polar
code in the previous section specializes to this induced classical channel.
The code consists of a set of
\textquotedblleft good\textquotedblright\ synthesized channels $\mathcal{G}%
_{N}\left(  W_{\text{FC}},\beta\right)  $ and \textquotedblleft
bad\textquotedblright\ synthesized channels $\mathcal{B}_{N}\left(
W_{\text{FC}},\beta\right)  $, where%
\begin{align}
\mathcal{G}_{N}\left(  W_{\text{FC}},\beta\right)   &  \equiv\big\{i\in\left[
N\right]  :F(W_{\text{FC},N}^{\left(  i\right)  })=Z(W_{\text{FC},N}^{\left(
i\right)  })<2^{-N^{\beta}}\big\},\\
\mathcal{B}_{N}\left(  W_{\text{FC}},\beta\right)   &  \equiv\left[  N\right]
\setminus\mathcal{G}_{N}\left(  W_{\text{FC}},\beta\right)  ,
\end{align}
and the equality $F(W_{\text{FC},N}^{\left(  i\right)  })=Z(W_{\text{FC}%
,N}^{\left(  i\right)  })$ holds because the induced channels are classical.
Furthermore, by Theorem~\ref{thm:fraction-good}, the number of good channels
in the limit that $N$ becomes large is as follows:%
\begin{equation}
\lim_{N\rightarrow\infty}\frac{1}{N}\left\vert \mathcal{G}_{N}\left(
W_{\text{FC}},\beta\right)  \right\vert =I\left(  W_{\text{FC}}\right)
.\label{eq:rate-FC}%
\end{equation}
Finally, each bit of this classical polar code can be decoded in time
$O\left(  N\right)  $ using a recursive calculation of likelihood ratios as
given in (75)-(76) of Ref.~\cite{A09}.\footnote{Note that this is the
\textquotedblleft first decoding algorithm\textquotedblright\ of Arikan. A
refinement implies that all of the bits can be decoded in time $O\left(  N\log
N\right)  $, but the first decoding algorithm is what we will use in this
work.}

Now, our main observation is the following relationship between the good
channels of~$W_{\text{FC}}$ and the good channels of $W$:%
\begin{equation}
\mathcal{G}_{N}\left(  W_{\text{FC}},\beta\right)  \subseteq\mathcal{G}%
_{N}\left(  W,\beta\right)  . \label{eq:main-obs-1}%
\end{equation}
This relationship holds because of the Fuchs-Caves formula from
(\ref{eq:min-statistical-overlap}). For all $i$, we have that%
\begin{equation}
F(W_{N}^{\left(  i\right)  })=\min_{\left\{  \Lambda_{y}\right\}  }Z(
W_{N}^{\left(  i\right)  },\Lambda)  \leq Z(  W_{\text{FC}%
,N}^{\left(  i\right)  })  , \label{eq:main-observation}%
\end{equation}
where the inequality follows because the tensor-product Fuchs-Caves
measurement that induces the synthesized channel $W_{\text{FC},N}^{\left(
i\right)  }$ is a particular kind of measurement, and so its classical
statistical overlap can only be larger than that realized by the optimal
measurement (which in general will be a collective measurement rather than a
product measurement). Now, for all $i\in\mathcal{G}_{N}\left(  W_{\text{FC}%
},\beta\right)  $, we have that%
\begin{equation}
Z(W_{\text{FC},N}^{\left(  i\right)  })<2^{-N^{\beta}}.
\label{eq:Fuchs-Caves-good-errors}%
\end{equation}
This in turn implies that $F(W_{N}^{\left(  i\right)  })<2^{-N^{\beta}}$ by
(\ref{eq:main-observation}), and so for this $i$, we have that $i\in
\mathcal{G}_{N}\left(  W,\beta\right)  $ and can conclude (\ref{eq:main-obs-1}).

This observation has non-trivial implications for the structure of the polar
decoder. For all of the bits in $\mathcal{G}_{N}\left(  W_{\text{FC}}%
,\beta\right)  $, the receiver can decode them with what amounts to an
effectively \textquotedblleft product\textquotedblright\ or \textquotedblleft
non-collective\textquotedblright\ strategy,\footnote{If a decoding strategy
amounts to coherent implementations of product measurements followed by
coherent processing of the outcomes, we still say that it is a product
strategy rather than collective.} while for the bits in $\mathcal{G}%
_{N}\left(  W,\beta\right)  \setminus\mathcal{G}_{N}\left(  W_{\text{FC}%
},\beta\right)  $, we still require collective measurements in order for the
receiver to decode them with the error probability guarantee given by
(\ref{eq:polar-coding-rule}). However, when decoding the bits in $\mathcal{G}%
_{N}\left(  W_{\text{FC}},\beta\right)  $, the receiver should be careful to
decode them in the least destructive way possible so that Sen's
non-commutative union bound is still applicable and we obtain the overall
error bound guaranteed by Proposition~\ref{prop:error-bound}. In particular,
the decoder should begin by performing an isometric extension of the
Fuchs-Caves measurement on each channel output:%
\begin{equation}
\sum_{y}\left\vert y\right\rangle \left\langle y\right\vert \otimes\left\vert
\lambda_{y}\right\rangle ,\label{eq:FC-isometry}%
\end{equation}
where the orthonormal basis $\left\{  \left\vert y\right\rangle \right\}  $ is
from the eigendecomposition in (\ref{eq:q-likelihood-decomp}) and the basis
$\left\{  \left\vert \lambda_{y}\right\rangle \right\}  $ encodes the
eigenvalues to some finite precision. Such an operation coherently copies the
likelihood ratios $\lambda_{y}$ of the Fuchs-Caves measurement into an
ancillary register. The receiver then performs a reversible implementation of
Arikan's decoding algorithm to process these likelihood ratios according to
(75)-(76) of Ref.~\cite{A09}. Finally, the receiver coherently copies the
value of a single decision qubit with a CNOT gate to an ancillary register,
measures the decision qubit, and \textquotedblleft
uncomputes\textquotedblright\ these operations by performing the inverse of
the Arikan circuit and the inverse of the operations in (\ref{eq:FC-isometry}%
). Figure~\ref{fig:FC-coherent-decoder}\ depicts these operations. The effect
of these operations is to implement a projection of the channel output onto a
subspace spanned by eigenvectors $|y^{N}\rangle=\left\vert y_{1}\right\rangle
\otimes\cdots\otimes\left\vert y_{N}\right\rangle $ of the Fuchs-Caves
measurements such that%
\begin{equation}
W_{\text{FC},N}^{\left(  i\right)  }\left(  y^{N},u_{1}^{i-1}|0\right)  \geq
W_{\text{FC},N}^{\left(  i\right)  }\left(  y^{N},u_{1}^{i-1}|1\right)  ,
\end{equation}
or onto the complementary subspace spanned by eigenvectors $|y^{N}\rangle$
such that%
\begin{equation}
W_{\text{FC},N}^{\left(  i\right)  }\left(  y^{N},u_{1}^{i-1}|0\right)
<W_{\text{FC},N}^{\left(  i\right)  }\left(  y^{N},u_{1}^{i-1}|1\right)  ,
\end{equation}
where $y^{N}$ is the classical output of the Fuchs-Caves channel and
$u_{1}^{i-1}$ denotes the previously decoded bits. Thus, the fidelity bound
from (\ref{eq:Fuchs-Caves-good-errors}) is applicable and Sen's
non-commutative union bound guarantees that the overall contribution of the
error in decoding bit $i\in\mathcal{G}_{N}\left(  W_{\text{FC}},\beta\right)
$ is no larger than $2^{-N^{\beta}}$. The time that it takes to process each
bit $i\in\mathcal{G}_{N}\left(  W_{\text{FC}},\beta\right)  $ is $O\left(
N\right)  $, which is clear from the structure of the circuit and
Arikan's \textquotedblleft first decoding algorithm.\textquotedblright

\begin{figure}[ptb]
\begin{center}
\includegraphics[
width=5.0502in
]{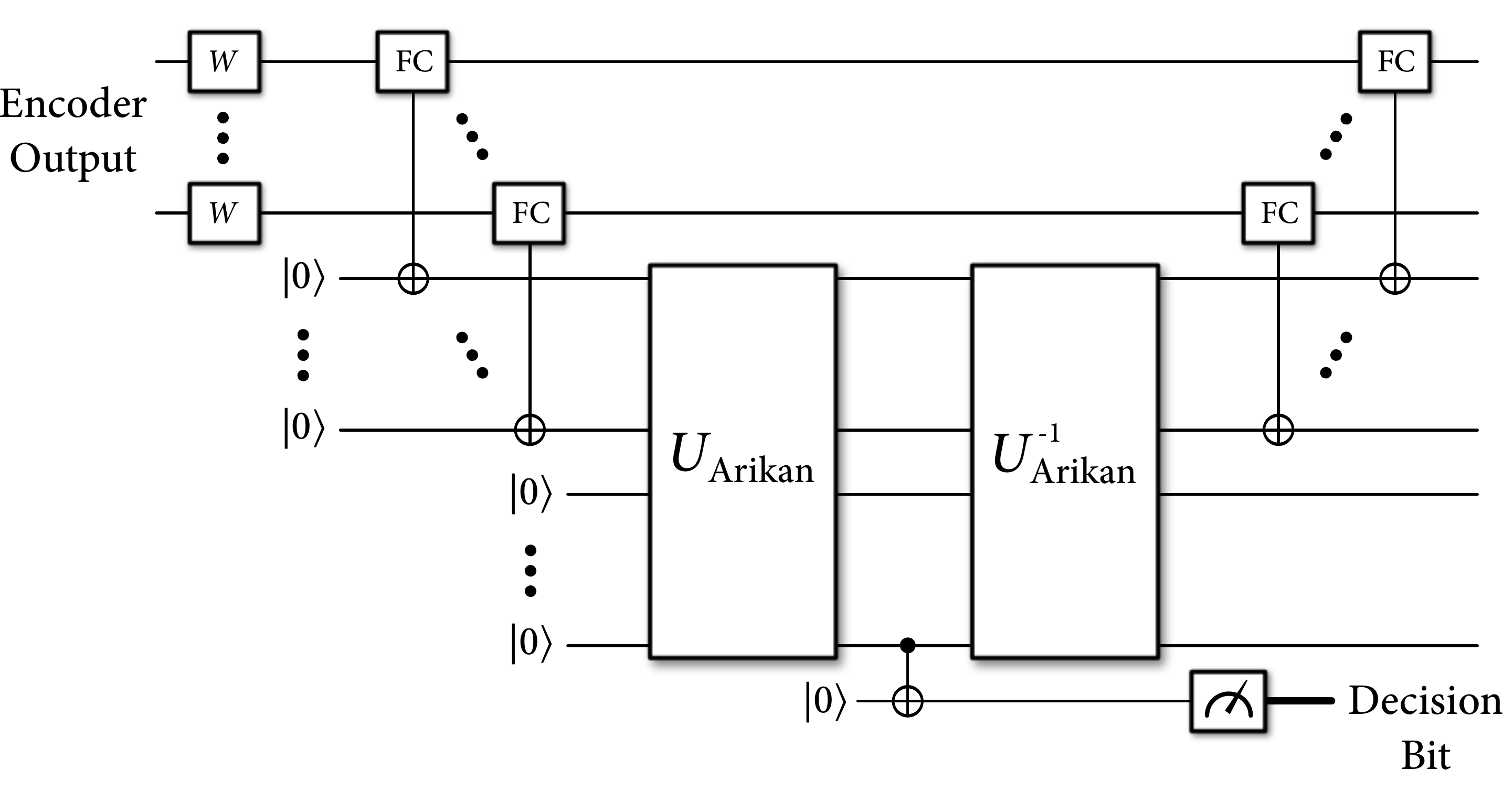}
\end{center}\vspace{-.30in}
\caption{The circuit for recovering an information bit in the set $\mathcal{G}%
_{N}\left(  W_{\text{FC}},\beta\right)  $. The  encoder output  is fed
into $N$ instances of\ the channel $W$. The receiver acts with $N$ of the
unitaries in (\ref{eq:FC-isometry}), labeled as \textquotedblleft
FC\textquotedblright\ boxes which coherently copy the likelihood ratios
$\lambda_{y_{1}}$, \ldots, $\lambda_{y_{N}}$ into ancillary registers. The
receiver then acts with a reversible implementation of Arikan's likelihood
ratio computations, copies the decision bit into an ancillary register, and
measures the decision bit to decode the $i^{\text{th}}$ bit. The receiver
finally performs the inverse of these operations to \textquotedblleft clean
up,\textquotedblright\ i.e., to ensure that the next measurement can be
performed, whether it be to decode a bit in the set $\mathcal{G}_{N}\left(
W_{\text{FC}},\beta\right)  $ or the set $\mathcal{G}_{N}\left(
W,\beta\right)  \setminus\mathcal{G}_{N}\left(  W_{\text{FC}},\beta\right)  $.
The effect of this circuit is to perform the desired \textquotedblleft gentle
projection.\textquotedblright}%
\label{fig:FC-coherent-decoder}%
\end{figure}

For all of the remaining bits $i\in\mathcal{G}_{N}\left(  W,\beta\right)  \setminus
\mathcal{G}_{N}\left(  W_{\text{FC}},\beta\right)  $, we still do not know
whether there exists an efficient quantum algorithm for decoding them while
having the error probability from Proposition~\ref{prop:error-bound}. Thus,
for now, we simply suggest for the receiver to use collective measurements to
recover them.

It should be clear from Proposition~\ref{thm:fraction-good} and
(\ref{eq:rate-FC}) that the size of the set $\mathcal{G}_{N}\left(
W,\beta\right)  \setminus\mathcal{G}_{N}\left(  W_{\text{FC}},\beta\right)  $
in the limit is equal to%
\begin{equation}
\lim_{N\rightarrow\infty}\frac{1}{N}\left\vert \mathcal{G}_{N}\left(
W,\beta\right)  \setminus\mathcal{G}_{N}\left(  W_{\text{FC}},\beta\right)
\right\vert =I\left(  W\right)  -I\left(  W_{\text{FC}}\right)  .
\end{equation}
This makes it clear that one does not require a collective strategy in order
to recover all of the information bits, but a collective strategy is only
required in order to bridge the gap between $I\left(  W_{\text{FC}}\right)  $
and $I\left(  W\right)  $.

Observe also that similar reasoning applies to any product measurement,
not just the Fuchs-Caves measurements (we
focused on the Fuchs-Caves measurement due to its strong analogy with a
likelihood ratio and because Arikan's decoding algorithm processes likelihood
ratios). With this in mind, we could simply choose the product measurement to
be the one that maximizes the accessible information, in order to maximize the
number of bits that can be processed efficiently. Let $W_{\text{acc}}$ be the
classical channel induced by performing the measurement that maximizes the
accessible information. One would then process the bits in $\mathcal{G}%
_{N}\left(  W_{\text{acc}},\beta\right)  $ in a way very similar as described
above. All of the observations above then justify Claim~\ref{clm:main-result}.

The reasoning also leads to a generalization of
Lemma~\ref{lem:FC-error}\ that applies when using Fuchs-Caves measurements to
distinguish a tensor-product state $\rho_{0}^{\otimes N}$ from $\rho
_{1}^{\otimes N}$. The test consists of performing product measurements
followed by classical post-processing. If one wishes to perform this test in
the most delicate way possible, one could perform it as in
Figure~\ref{fig:FC-coherent-decoder}.

\begin{lemma}
\label{lem:FC-error-N-copies}
When using product Fuchs-Caves measurements to
distinguish $\rho_{0}^{\otimes N}$ from $\rho_{1}^{\otimes N}$, the probability of
error $p_e$ is bounded from above in terms of the quantum fidelity $F\left(  \rho_{0},\rho_{1}\right)$:
\begin{equation}
p_{e}\leq\frac{1}{2}\left[  F\left(  \rho_{0},\rho_{1}\right)  \right]  ^{N}.
\end{equation}

\end{lemma}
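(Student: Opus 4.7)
The plan is to lift the single-copy calculation of Lemma~\ref{lem:FC-error} to the tensor-product setting by observing that product Fuchs--Caves measurements followed by a product-likelihood-ratio decision rule induce an i.i.d. classical hypothesis test whose Bhattacharyya parameter factorizes.

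First I would fix the explicit form of the test. On each copy apply the Fuchs--Caves measurement from (\ref{eq:q-likelihood-decomp}), obtaining outcomes $y_1,\ldots,y_N$ with associated quantum likelihood ratios $\lambda_{y_j}=\left(\langle y_j|\rho_0|y_j\rangle/\langle y_j|\rho_1|y_j\rangle\right)^{1/2}$. The decision rule for the $N$-copy problem is the natural one: decide $\rho_0^{\otimes N}$ if $\prod_{j=1}^{N}\lambda_{y_j}\geq 1$ and decide $\rho_1^{\otimes N}$ otherwise. This corresponds to the two-outcome POVM
\begin{equation}
\Pi_0^{(N)}\equiv\sum_{y^N:\prod_j\lambda_{y_j}\geq 1}|y^N\rangle\langle y^N|,\qquad
\Pi_1^{(N)}\equiv I-\Pi_0^{(N)},
\end{equation}
where $|y^N\rangle\equiv|y_1\rangle\otimes\cdots\otimes|y_N\rangle$.

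Next I would mimic the chain (\ref{eq:FC-analysis-first})--(\ref{eq:FC-analysis-last}) but on the tensor power. Writing $2p_e=\mathrm{Tr}\{\Pi_0^{(N)}\rho_1^{\otimes N}\}+\mathrm{Tr}\{\Pi_1^{(N)}\rho_0^{\otimes N}\}$ and expanding,
\begin{equation}
2p_e=\sum_{y^N:\prod_j\lambda_{y_j}\geq 1}\prod_{j=1}^{N}\langle y_j|\rho_1|y_j\rangle+\sum_{y^N:\prod_j\lambda_{y_j}<1}\prod_{j=1}^{N}\langle y_j|\rho_0|y_j\rangle.
\end{equation}
The key observation is that $\prod_j\lambda_{y_j}\geq 1$ is exactly the condition $\prod_j\langle y_j|\rho_0|y_j\rangle\geq\prod_j\langle y_j|\rho_1|y_j\rangle$, so on that event $\prod_j\langle y_j|\rho_1|y_j\rangle\leq\prod_j\sqrt{\langle y_j|\rho_0|y_j\rangle\langle y_j|\rho_1|y_j\rangle}$, and symmetrically on the complementary event. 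This is the only step that really uses the structure of the decision rule and it is precisely the $N$-copy analogue of the single-letter manipulation already carried out in the proof of Lemma~\ref{lem:FC-error}.

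Combining these bounds collapses the two sums into one sum over all $y^N$, and the product distribution factorizes:
\begin{equation}
2p_e\leq\sum_{y^N}\prod_{j=1}^{N}\sqrt{\langle y_j|\rho_0|y_j\rangle\langle y_j|\rho_1|y_j\rangle}=\Bigg(\sum_{y}\sqrt{\langle y|\rho_0|y\rangle\langle y|\rho_1|y\rangle}\Bigg)^{N}=F(\rho_0,\rho_1)^N,
\end{equation}
where the last equality invokes the Fuchs--Caves identity (\ref{eq:min-statistical-overlap}) for a single copy. Dividing by two yields the claimed bound. The only thing that could conceivably cause trouble is the interaction between the decision threshold and the factorization, but that is resolved by the observation above that $\prod_j\lambda_{y_j}\geq 1$ is equivalent to a direct inequality between the two product densities, so the bookkeeping is essentially identical to the single-copy case.
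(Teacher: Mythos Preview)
Your proof is correct and follows essentially the same route as the paper: product Fuchs--Caves measurements, the product-likelihood decision rule $\prod_j\lambda_{y_j}\gtrless 1$, then the same $\sqrt{\cdot}$ replacement as in (\ref{eq:FC-analysis-first})--(\ref{eq:FC-analysis-last}) applied to the product distributions, followed by factorization and the single-copy identity (\ref{eq:min-statistical-overlap}). The paper's proof is just a slightly more terse version of what you wrote.
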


\begin{proof}
The proof is very similar to the proof of Lemma~\ref{lem:FC-error}. 
The test, though, consists of performing individual Fuchs-Caves measurements on the $N$
systems, and these tests result in likelihood ratios $\lambda_{y_{1}}$,
\ldots, $\lambda_{y_{N}}$. The decision rule is then as follows:%
\begin{align}
\text{decide~}\rho_{0}^{\otimes N}\text{ if }\lambda_{y_{1}}\times\cdots
\times\lambda_{y_{N}} &  \geq1,\label{eq:decision-rule-1}\\
\text{decide~}\rho_{1}^{\otimes N}\text{ if }\lambda_{y_{1}}\times\cdots
\times\lambda_{y_{N}} &  <1.\label{eq:decision-rule-2}%
\end{align}
An analysis proceeding exactly as in\ (\ref{eq:FC-analysis-first}%
)-(\ref{eq:FC-analysis-last}) leads to the following bound:%
\begin{align}
2\ p_{e}\left(  W\right)   &  \leq\sum_{y_{1},\ldots,y_{N}}\left[
\left\langle y_{1}\right\vert \cdots\left\langle y_{N}\right\vert \rho
_{1}^{\otimes N}\left\vert y_{1}\right\rangle \cdots\left\vert y_{N}%
\right\rangle \right]  ^{1/2}\ \left[  \left\langle y_{1}\right\vert
\cdots\left\langle y_{N}\right\vert \rho_{0}^{\otimes N}\left\vert
y_{1}\right\rangle \cdots\left\vert y_{N}\right\rangle \right]  ^{1/2} \nonumber\\
&  =\sum_{y_{1},\ldots,y_{N}}\left\langle y_{1}\right\vert \rho_{1}\left\vert
y_{1}\right\rangle ^{1/2}\ \cdots\ \left\langle y_{N}\right\vert \rho
_{1}\left\vert y_{N}\right\rangle ^{1/2}\ \left\langle y_{1}\right\vert
\rho_{0}\left\vert y_{1}\right\rangle ^{1/2}\ \cdots\ \left\langle
y_{N}\right\vert \rho_{0}\left\vert y_{N}\right\rangle ^{1/2}\\
&  =\sum_{y_{1}}\left\langle y_{1}\right\vert \rho_{1}\left\vert
y_{1}\right\rangle ^{1/2}\left\langle y_{1}\right\vert \rho_{0}\left\vert
y_{1}\right\rangle ^{1/2}\ \cdots\ \sum_{y_{N}}\left\langle y_{N}\right\vert
\rho_{1}\left\vert y_{N}\right\rangle ^{1/2}\left\langle y_{N}\right\vert
\rho_{0}\left\vert y_{N}\right\rangle ^{1/2}\\
&  =\left[  F\left(  \rho_{0},\rho_{1}\right)  \right]  ^{N}.
\end{align}
Furthermore, one can implement this test efficiently and non-destructively on
a quantum computer as described in Figure~\ref{fig:FC-coherent-decoder}. The
result is to project onto two different subspaces:\ the one spanned by
eigenvectors whose corresponding eigenvalues satisfy (\ref{eq:decision-rule-1}%
) and the other.
\end{proof}

\section{Decoding the Pure-Loss Bosonic Channel}

\label{sec:bosonic}
\begin{figure}[ptb]
\begin{center}
\includegraphics[
width=4.3502in
]{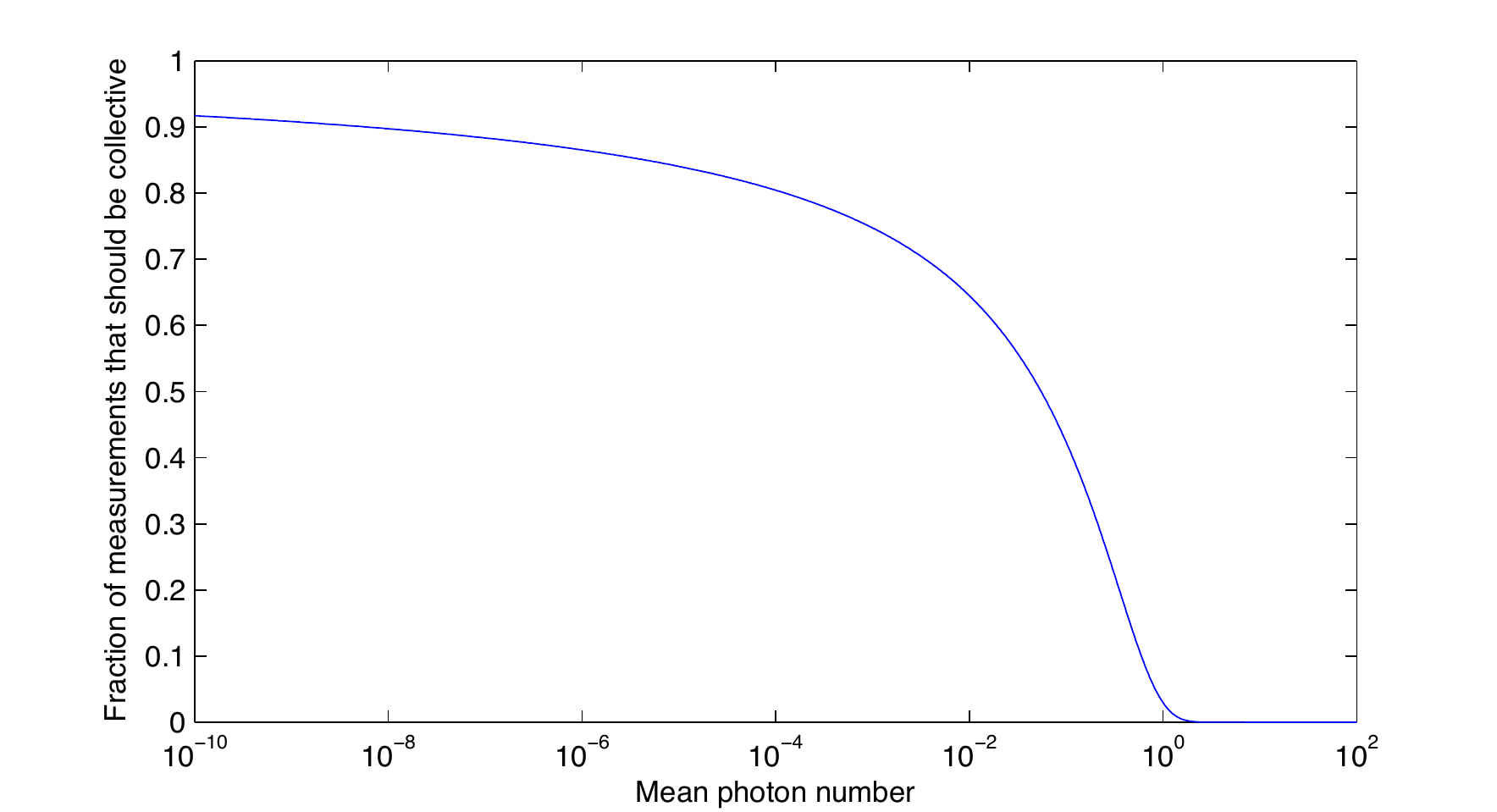}
\end{center}
\caption{The fraction of collective measurements required for a polar decoder
plotted as a function of the mean photon number $E$ at the receiving end, when
using a BPSK coding strategy.}%
\label{fig:fraction}%
\end{figure}
A channel of particular practical interest is the pure-loss bosonic channel. A
simple physical model for this channel is a beamsplitter of transmissivity
$\eta\in\left[  0,1\right]  $, where the sender has access to one input port,
the environment injects the vacuum state into the other input port, the
receiver has access to one output port, and the environment obtains the other
output port. It is well known that the Holevo capacity of this channel is
equal to $g\left(  \eta N_{S}\right)  \equiv\left(  \eta N_{S}+1\right)
\log\left(  \eta N_{S}+1\right)  -\eta N_{S}\log\left(  \eta N_{S}\right)  $
\cite{GGLMSY04}, where $N_{S}$ is the mean input photon number. In the
low-photon number regime, one can come very close to achieving the capacity by
employing a binary phase-shift keying (BPSK) strategy (using coherent states
$\left\vert \alpha\right\rangle $ and $\left\vert -\alpha\right\rangle $ as
the signaling states) \cite{MO00}. The BPSK strategy induces a cq channel of
the following form: $x\rightarrow\left\vert \left(  -1\right)  ^{x}%
\alpha\right\rangle \left\langle \left(  -1\right)  ^{x}\alpha\right\vert $.
The symmetric Holevo rate for this channel is equal to $\chi\left(  E\right)
\equiv h_{2}\left(  \left[  1+e^{-2E}\right]  /2\right)  $, where $h_{2}$ is
the binary entropy and $E\equiv\eta N_{S}$. If the receiver performs a
Helstrom measurement at every channel output, this induces a classical channel
with symmetric mutual information equal to $I_{\text{Hel}}\left(  E\right)
\equiv1-h_{2}(  [  1-\sqrt{1-e^{-4E}}]  /2)  $. (See
Ref.~\cite{GW12}, for example, for explicit calculations.) Our results in the
previous section demonstrate that the fraction of information bits required to
be decoded using a collective strategy is equal to $1-I_{\text{Hel}}\left(
E\right)  /\chi\left(  E\right)  $. Figure~\ref{fig:fraction} reveals that
this fraction is rather small for mean photon number (MPN) larger than one, but then
it rises sharply as we enter a quantum regime where the MPN is
less than one. Even deep in the quantum regime at a MPN of $10^{-8}$, however, roughly 10\% of the bits do not require collective decoding.

\section{Small Blocklength
Polar Decoders}

\label{sec:last-result}This section briefly discusses how the
Helstrom measurements \cite{H69,Hel76}\ in the quantum successive cancellation
decoder from Ref.~\cite{WG11}\ decompose for very small size polar codes.

\subsection{Two-Bit Polar Decoder}

We begin by considering the simple two-bit polar code. The channel is of the
form $x\rightarrow\rho_{x}$, where $x\in\left\{  0,1\right\}  $ and $\rho_{x}$
is some conditional density operator. The two-bit polar code performs the
simple transformation on the input bits $u_{1}$ and $u_{2}$:%
\begin{equation}
\left(  u_{1},u_{2}\right)  \rightarrow\left(  u_{1}+u_{2},u_{2}\right)  ,
\label{eq:first-polar-encoder}%
\end{equation}
where addition is modulo 2.

The first step of the successive cancellation decoder is to recover $u_{1}$,
assuming that bit~$u_{2}$ is chosen uniformly at random. The optimal
measurement is a Helstrom measurement, and in this case, it amounts to
distinguishing between the following two states%
\begin{equation}
\frac{1}{2}\sum_{u_{2}}\rho_{u_{2}}\otimes\rho_{u_{2}},\ \ \ \ \ \ \ \frac
{1}{2}\sum_{u_{2}}\rho_{u_{2}+1}\otimes\rho_{u_{2}}.
\end{equation}
The Helstrom measurement is given by the projector onto the positive
eigenspace of the difference of the two density operators above:%
\begin{align}
\left\{  \frac{1}{2}\sum_{u_{2}}\rho_{u_{2}}\otimes\rho_{u_{2}}-\frac{1}%
{2}\sum_{u_{2}}\rho_{u_{2}+1}\otimes\rho_{u_{2}}\geq0\right\}   &  =\left\{
\sum_{u_{2}}\left(  \rho_{u_{2}}-\rho_{u_{2}+1}\right)  \otimes\rho_{u_{2}%
}\geq0\right\}  \\
&  =\left\{  \sum_{u_{2}}\left(  -1\right)  ^{u_{2}}\left(  \rho_{0}-\rho
_{1}\right)  \otimes\rho_{u_{2}}\geq0\right\}  \\
&  =\left\{  \left(  \rho_{0}-\rho_{1}\right)  \otimes\sum_{u_{2}}\left(
-1\right)  ^{u_{2}}\rho_{u_{2}}\geq0\right\}  \\
&  =\left\{  \left(  \rho_{0}-\rho_{1}\right)  \otimes\left(  \rho_{0}%
-\rho_{1}\right)  \geq0\right\}  .\label{eq:1st-two-qubit-test}%
\end{align}
Thus, this test factorizes into the parity of the individual quantum hypothesis
tests $\left\{  \left(  \rho_{0}-\rho_{1}\right)  \geq0\right\}  $. That is,
supposing that $\Pi_{+}\equiv\left\{  \left(  \rho_{0}-\rho_{1}\right)
\geq0\right\}  $ and $\Pi_{-}\equiv\left\{  \left(  \rho_{0}-\rho_{1}\right)
<0\right\}  $, one can write the two-bit test as the product of two controlled
gates%
\begin{align}
U_{1} &  \equiv I_{B_{1}}\otimes\left(  \Pi_{+}\right)  _{B_{2}}\otimes
I_{A}+I_{B_{1}}\otimes\left(  \Pi_{-}\right)  _{B_{2}}\otimes\left(
\sigma_{X}\right)  _{A},\\
U_{2} &  \equiv\left(  \Pi_{+}\right)  _{B_{1}}\otimes I_{B_{2}}\otimes
I_{A}+\left(  \Pi_{-}\right)  _{B_{1}}\otimes I_{B_{2}}\otimes\left(
\sigma_{X}\right)  _{A},
\end{align}
where $B_{1}$ is the first channel output, $B_{2}$ is the second channel
output, and $A$ is an ancillary system initialized to the state $\left\vert
0\right\rangle $. The product of these two unitary gates is equal to%
\begin{multline}
U_{1}U_{2}=\left(  \left(  \Pi_{+}\right)  _{B_{1}}\otimes\left(  \Pi
_{+}\right)  _{B_{2}}+\left(  \Pi_{-}\right)  _{B_{1}}\otimes\left(  \Pi
_{-}\right)  _{B_{2}}\right)  \otimes I_{A}+\\
\left(  \left(  \Pi_{-}\right)  _{B_{1}}\otimes\left(  \Pi_{+}\right)
_{B_{2}}+\left(  \Pi_{+}\right)  _{B_{1}}\otimes\left(  \Pi_{-}\right)
_{B_{2}}\right)  \otimes\left(  \sigma_{X}\right)  _{A}.
\end{multline}
The receiver would then measure the ancillary system $A$ in order to make a
decision about~$u_{1}$.

Next, we determine the decoding of $u_{2}$, given that $u_{1}$ has
already been decoded. By the definition of the polar encoder transformation in
(\ref{eq:first-polar-encoder}), the goal is to distinguish between the
following two states:%
\begin{equation}
\rho_{u_{1}}\otimes\rho_{0},\ \ \ \ \ \ \ \rho_{u_{1}+1}\otimes\rho_{1}.
\end{equation}
The optimal quantum hypothesis test is given by the following projector:%
\begin{equation}
\left\{  \rho_{u_{1}}\otimes\rho_{0}-\rho_{u_{1}+1}\otimes\rho_{1}%
\geq0\right\}  .\label{eq:two-qubit-two-parity}%
\end{equation}
This optimal quantum hypothesis test is not factorizable into smaller tests,
and indeed, it is necessary to perform a collective measurement in order to
implement it. Nonetheless, Lemma~\ref{lem:FC-error-N-copies}\ provides a simple
implementation of the Fuchs-Caves measurement for distinguishing these two states.

\subsection{Four-Bit Polar Decoder}

We now consider the form of Helstrom measurements for a four-bit polar code.
Recall that the input transformation for the four-bit polar code is as follows:%
\begin{equation}
\left(  u_{1},u_{2},u_{3},u_{4}\right)  \rightarrow\left(  u_{1}+u_{2}%
+u_{3}+u_{4},u_{3}+u_{4},u_{2}+u_{4},u_{4}\right)  .
\end{equation}
It is straightforward to find the form of the four different tests for
decoding $u_{1}$ through $u_{4}$. (See the appendix for derivations.) The test
for decoding $u_{1}$ is again a parity test:%
\begin{equation}
\left\{  \left(  \rho_{0}-\rho_{1}\right)  ^{\otimes4}\geq0\right\}  .
\end{equation}
The test for decoding $u_{2}$ given $u_{1}$ is%
\begin{equation}
\left\{  \left(  \sum_{u_{3}^{\prime}}\rho_{u_{1}+u_{3}^{\prime}}\otimes
\rho_{u_{3}^{\prime}}\right)  \otimes\left(  \sum_{u_{4}}\rho_{u_{4}}%
\otimes\rho_{u_{4}}\right)  -\left(  \sum_{u_{3}^{\prime}}\rho_{u_{1}%
+1+u_{3}^{\prime}}\otimes\rho_{u_{3}^{\prime}}\right)  \otimes\left(
\sum_{u_{4}}\rho_{1+u_{4}}\otimes\rho_{u_{4}}\right)  \geq0\right\}  .
\end{equation}
It remains unclear to us if there is a simple way to decompose the above test
any further into non-collective actions (or even approximately using, e.g., the
Fuchs-Caves measurement).
The test for decoding $u_{3}$ given $u_{2}$ and $u_{1}$ is%
\begin{equation}
\left\{  \left(  \rho_{u_{1}+u_{2}}\otimes\rho_{0}-\rho_{u_{1}+u_{2}+1}%
\otimes\rho_{1}\right)  \otimes\left(  \rho_{u_{2}}\otimes\rho_{0}-\rho
_{u_{2}+1}\otimes\rho_{1}\right)  \geq0\right\}  .
\end{equation}
One could actually approximate this test \textquotedblleft
efficiently\textquotedblright\ by performing a product Fuchs-Caves measurement
of the first two systems, a product Fuchs-Caves measurement of the last two,
and then take the parity of the results of these two tests (of course
implementing these tests coherently). The final Helstrom test for decoding
$u_{4}$ given $u_{3}$, $u_{2}$, and $u_{1}$ is%
\begin{equation}
\left\{  \rho_{u_{1}+u_{2}+u_{3}}\otimes\rho_{u_{3}}\otimes\rho_{u_{2}}%
\otimes\rho_{0}-\rho_{u_{1}+u_{2}+u_{3}+1}\otimes\rho_{u_{3}+1}\otimes
\rho_{u_{2}+1}\otimes\rho_{1}\geq0\right\}  .
\end{equation}
Clearly, it would be better to perform this last test 
by processing the likelihood ratios
resulting from individual
Fuchs-Caves measurements, rather than performing the optimal collective
Helstrom measurement.

\subsection{Polar Decoder for Larger Blocklengths}

One can continue in the above fashion to determine the form of
a quantum successive cancellation decoder that recovers each bit of an
eight-bit polar code.  We again try to simplify each Helstrom measurement and 
provide an expression for each one in Appendix \ref{8bit measurements}.
A few tests simplify, in particular those used to
recover the first bit $u_1$ (Eq.~\eqref{8bit_u1}), the fifth bit $u_5$
(Eq.~\eqref{8bit_u5}, the seventh bit $u_7$ (Eq.~\eqref{8bit_u7}),
and the last bit $u_8$ (Eq.~\eqref{8bit_u8}). However,
for the other tests, it is unclear if they can be 
approximated by some combination of Helstrom and
Fuchs-Caves measurements, followed by coherent post-processing.

From considering the eight-bit polar decoder, we can make several
observations. For any blocklength, it is always possible to recover the first
bit efficiently by calculating the parity of individual Helstrom measurements
(though, this bit is always the \textquotedblleft worst\textquotedblright%
\ bit, so the receiver would never actually be decoding it in practice). The receiver can always
 recover the last bit by performing a Fuchs-Caves measurement (this
is always the \textquotedblleft best\textquotedblright\ bit, so this should
already be evident from the main observation in this paper). Furthermore,
there are many bits that can be recovered by first performing Fuchs-Caves
measurements, followed by the parity of these tests. Unfortunately, the fraction of
these tests tends to zero in the limit of large blocklength. Thus, there still
remains much to understand regarding the structure of a polar decoder.

\section{Conclusion}

The main result of this paper is an advance over previous schemes for decoding
classical information transmitted over channels with classical inputs and
quantum outputs. In particular, we have shown that $N\cdot I\left(
W_{\text{acc}}\right)  $ of the information bits can be decoded reliably and
efficiently on a quantum computer by a \textquotedblleft
non-collective\textquotedblright\ coherent decoding strategy, while closing
the gap to the Holevo information rate (decoding the other $N \left(  I\left(
W\right)  -I\left(  W_{\text{acc}}\right)  \right)  $ bits) should require a
collective strategy. For the pure-loss bosonic channel, this implies that the
majority of the bits transmitted can be decoded by a product strategy whenever
the mean photon number is larger than one, while the fraction of collective
measurements required increases sharply as the mean photon number decreases
below one, marking the beginning of the quantum regime. Remarkably, even at
mean photon numbers as low as $10^{-8}$, roughly 10\% of the bits do not require
collective decoding, however. As another contribution, we have
shown that a receiver can also employ collective Fuchs-Caves measurements when
decoding a classical-quantum polar code. Finally, we gave the explicit form of
the Helstrom measurements of a quantum successive cancellation decoder for
two-, four-, and eight-bit polar codes. This should be helpful in determining the explicit
form of tests for larger blocklength polar codes.

The main open question is still to determine whether all of the information
bits can be efficiently decoded on a quantum computer. To answer this
question, one might consider employing the Schur transform \cite{BCZ, H05,C06} and
exploiting the structure inherent in polar codes. Unfortunately, it is not clear to
us that this approach will lead to a quantum successive cancellation decoder
with time complexity $O\left(  N\log N\right)  $ because the complexity of the
Schur transform is higher than this.

We acknowledge helpful discussions with Fr\'{e}d\'{e}ric Dupuis, Saikat Guha,
Hari Krovi, David Poulin, and Joseph Renes. MMW\ acknowledges support
from Montreal's Centre de Recherches Math\'{e}matiques. OLC aknowledges support
from NSERC through a Vanier scholarship. PH\ acknowledges support from the
Canada Research Chairs program, the Perimeter Institute, CIFAR, FQRNT's
INTRIQ, NSERC, and ONR through grant N000140811249.

\bibliography{Ref}

\appendix

\section{Derivations for the Four-Bit Polar Decoder Measurements}

The four-bit polar encoder amounts to the following transformation:%
\begin{equation}
\left(  u_{1},u_{2},u_{3},u_{4}\right)  \rightarrow\left(  u_{1}+u_{2}%
+u_{3}+u_{4},u_{3}+u_{4},u_{2}+u_{4},u_{4}\right)  .
\end{equation}

\subsection{Recovering $u_{1}$}

Let us first determine how the quantum successive cancellation decoder (QSCD)
recovers the bit $u_{1}$, assuming that $u_{2}$, $u_{3}$, and $u_{4}$ are
chosen uniformly at random. The test aims to distinguish between the following
two states:%
\begin{align}
&  \frac{1}{2^{3}}\sum_{u_{2},u_{3},u_{4}}\rho_{u_{2}+u_{3}+u_{4}}\otimes
\rho_{u_{3}+u_{4}}\otimes\rho_{u_{2}+u_{4}}\otimes\rho_{u_{4}},\\
&  \frac{1}{2^{3}}\sum_{u_{2},u_{3},u_{4}}\rho_{u_{2}+u_{3}+u_{4}+1}%
\otimes\rho_{u_{3}+u_{4}}\otimes\rho_{u_{2}+u_{4}}\otimes\rho_{u_{4}},
\end{align}
and it performs the following projection:%
\begin{align}
&  \left\{  \sum_{u_{2},u_{3},u_{4}}\left(  \rho_{u_{2}+u_{3}+u_{4}}%
-\rho_{u_{2}+u_{3}+u_{4}+1}\right)  \otimes\rho_{u_{3}+u_{4}}\otimes
\rho_{u_{2}+u_{4}}\otimes\rho_{u_{4}}\geq0\right\}  \nonumber\\
&  =\left\{  \sum_{u_{2},u_{3},u_{4}}\left(  -1\right)  ^{u_{2}+u_{3}+u_{4}%
}\left(  \rho_{0}-\rho_{1}\right)  \otimes\rho_{u_{3}+u_{4}}\otimes\rho
_{u_{2}+u_{4}}\otimes\rho_{u_{4}}\geq0\right\}  \\
&  =\left\{  \left(  \rho_{0}-\rho_{1}\right)  \otimes\sum_{u_{2},u_{3},u_{4}%
}\left(  -1\right)  ^{u_{2}+u_{3}+u_{4}}\rho_{u_{3}+u_{4}}\otimes\rho
_{u_{2}+u_{4}}\otimes\rho_{u_{4}}\geq0\right\}  \\
&  =\left\{  \left(  \rho_{0}-\rho_{1}\right)  \otimes\sum_{u_{2},u_{3},u_{4}%
}\left(  -1\right)  ^{u_{3}+u_{4}}\rho_{u_{3}+u_{4}}\otimes\left(  -1\right)
^{u_{2}+u_{4}}\rho_{u_{2}+u_{4}}\otimes\left(  -1\right)  ^{u_{4}}\rho_{u_{4}%
}\geq0\right\}
\end{align}%
\begin{align}
&  =\left\{  \left(  \rho_{0}-\rho_{1}\right)  \otimes\sum_{u_{2}^{\prime
},u_{3}^{\prime},u_{4}^{\prime}}\left(  -1\right)  ^{u_{2}^{\prime}}%
\rho_{u_{2}^{\prime}}\otimes\left(  -1\right)  ^{u_{3}^{\prime}}\rho
_{u_{3}^{\prime}}\otimes\left(  -1\right)  ^{u_{4}^{\prime}}\rho
_{u_{4}^{\prime}}\geq0\right\}  \\
&  =\left\{  \left(  \rho_{0}-\rho_{1}\right)  \otimes\sum_{u_{2}^{\prime}%
}\left(  -1\right)  ^{u_{2}^{\prime}}\rho_{u_{2}^{\prime}}\otimes\sum
_{u_{3}^{\prime}}\left(  -1\right)  ^{u_{3}^{\prime}}\rho_{u_{3}^{\prime}%
}\otimes\sum_{u_{4}^{\prime}}\left(  -1\right)  ^{u_{4}^{\prime}}\rho
_{u_{4}^{\prime}}\geq0\right\}  \\
&  =\left\{  \left(  \rho_{0}-\rho_{1}\right)  \otimes\left(  \rho_{0}%
-\rho_{1}\right)  \otimes\left(  \rho_{0}-\rho_{1}\right)  \otimes\left(
\rho_{0}-\rho_{1}\right)  \geq0\right\}  .
\end{align}
Thus, this first test nicely factors as the parity of the four individual
tests $\left\{  \left(  \rho_{0}-\rho_{1}\right)  \geq0\right\}  $.

\subsection{Recovering $u_{2}$ given $u_{1}$}

We now determine how the quantum successive cancellation decoder recovers
$u_{2}$ given $u_{1}$, while randomizing over $u_{3}$ and $u_{4}$. The aim is
to distinguish between the following two states:%
\begin{align}
&  \frac{1}{2^{2}}\sum_{u_{3},u_{4}}\rho_{u_{1}+u_{3}+u_{4}}\otimes\rho
_{u_{3}+u_{4}}\otimes\rho_{u_{4}}\otimes\rho_{u_{4}},\\
&  \frac{1}{2^{2}}\sum_{u_{3},u_{4}}\rho_{u_{1}+1+u_{3}+u_{4}}\otimes
\rho_{u_{3}+u_{4}}\otimes\rho_{1+u_{4}}\otimes\rho_{u_{4}},
\end{align}
which translates to a projection of the following form:%
\begin{equation}
\left\{  \sum_{u_{3},u_{4}}\rho_{u_{1}+u_{3}+u_{4}}\otimes\rho_{u_{3}+u_{4}%
}\otimes\rho_{u_{4}}\otimes\rho_{u_{4}}-\rho_{u_{1}+1+u_{3}+u_{4}}\otimes
\rho_{u_{3}+u_{4}}\otimes\rho_{1+u_{4}}\otimes\rho_{u_{4}}\geq0\right\}  .
\end{equation}
Define $u_{3}^{\prime}=u_{3}+u_{4}$ and the above becomes%
\begin{align}
&  \left\{  \sum_{u_{3}^{\prime},u_{4}}\rho_{u_{1}+u_{3}^{\prime}}\otimes
\rho_{u_{3}^{\prime}}\otimes\rho_{u_{4}}\otimes\rho_{u_{4}}-\rho
_{u_{1}+1+u_{3}^{\prime}}\otimes\rho_{u_{3}^{\prime}}\otimes\rho_{1+u_{4}%
}\otimes\rho_{u_{4}}\geq0\right\}  \nonumber\\
&  =\left\{  \left(  \sum_{u_{3}^{\prime}}\rho_{u_{1}+u_{3}^{\prime}}%
\otimes\rho_{u_{3}^{\prime}}\right)  \otimes\left(  \sum_{u_{4}}\rho_{u_{4}%
}\otimes\rho_{u_{4}}\right)  -\left(  \sum_{u_{3}^{\prime}}\rho_{u_{1}%
+1+u_{3}^{\prime}}\otimes\rho_{u_{3}^{\prime}}\right)  \otimes\left(
\sum_{u_{4}}\rho_{1+u_{4}}\otimes\rho_{u_{4}}\right)  \geq0\right\}  .
\end{align}

\subsection{Recovering $u_{3}$ given $u_{2}$ and $u_{1}$}

Let us determine how the QSCD recovers $u_{3}$ given $u_{2}$ and $u_{1}$,
while randomizing over $u_{4}$. The test distinguishes between the following
two states:%
\begin{align}
&  \frac{1}{2}\sum_{u_{4}}\rho_{u_{1}+u_{2}+u_{4}}\otimes\rho_{u_{4}}%
\otimes\rho_{u_{2}+u_{4}}\otimes\rho_{u_{4}},\\
&  \frac{1}{2}\sum_{u_{4}}\rho_{u_{1}+u_{2}+1+u_{4}}\otimes\rho_{1+u_{4}%
}\otimes\rho_{u_{2}+u_{4}}\otimes\rho_{u_{4}},
\end{align}
and amounts to a projector of the following form:%
\begin{align}
&  \left\{  \sum_{u_{4}}\rho_{u_{1}+u_{2}+u_{4}}\otimes\rho_{u_{4}}\otimes
\rho_{u_{2}+u_{4}}\otimes\rho_{u_{4}}-\sum_{u_{4}}\rho_{u_{1}+u_{2}+1+u_{4}%
}\otimes\rho_{1+u_{4}}\otimes\rho_{u_{2}+u_{4}}\otimes\rho_{u_{4}}%
\geq0\right\}  \nonumber\\
&  =\left\{  \sum_{u_{4}}\left(  \rho_{u_{1}+u_{2}+u_{4}}\otimes\rho_{u_{4}%
}-\rho_{u_{1}+u_{2}+1+u_{4}}\otimes\rho_{1+u_{4}}\right)  \otimes\rho
_{u_{2}+u_{4}}\otimes\rho_{u_{4}}\geq0\right\}  \\
&  =\left\{  \sum_{u_{4}}\left(  -1\right)  ^{u_{4}}\left(  \rho_{u_{1}+u_{2}%
}\otimes\rho_{0}-\rho_{u_{1}+u_{2}+1}\otimes\rho_{1}\right)  \otimes
\rho_{u_{2}+u_{4}}\otimes\rho_{u_{4}}\geq0\right\}  \\
&  =\left\{  \left(  \rho_{u_{1}+u_{2}}\otimes\rho_{0}-\rho_{u_{1}+u_{2}%
+1}\otimes\rho_{1}\right)  \otimes\sum_{u_{4}}\left(  -1\right)  ^{u_{4}}%
\rho_{u_{2}+u_{4}}\otimes\rho_{u_{4}}\geq0\right\}  \\
&  =\left\{  \left(  \rho_{u_{1}+u_{2}}\otimes\rho_{0}-\rho_{u_{1}+u_{2}%
+1}\otimes\rho_{1}\right)  \otimes\left(  \rho_{u_{2}}\otimes\rho_{0}%
-\rho_{u_{2}+1}\otimes\rho_{1}\right)  \geq0\right\}  .
\end{align}
Thus, this test nicely factorizes as the parity of two tests $\left\{  \left(
\rho_{u_{1}+u_{2}}\otimes\rho_{0}-\rho_{u_{1}+u_{2}+1}\otimes\rho_{1}\right)
\geq0\right\}  $ and $\left\{  \left(  \rho_{u_{2}}\otimes\rho_{0}-\rho
_{u_{2}+1}\otimes\rho_{1}\right)  \geq0\right\}  $.

\subsection{Recovering $u_{4}$ given $u_{3}$, $u_{2}$, and $u_{1}$}

Finally, we determine how the QSCD recovers $u_{4}$ given all of the previous
bits. The test in this case just aims to distinguish the following states:%
\begin{align}
&  \rho_{u_{1}+u_{2}+u_{3}}\otimes\rho_{u_{3}}\otimes\rho_{u_{2}}\otimes
\rho_{0},\\
&  \rho_{u_{1}+u_{2}+u_{3}+1}\otimes\rho_{u_{3}+1}\otimes\rho_{u_{2}+1}%
\otimes\rho_{1},
\end{align}
and amounts to the following projection:%
\begin{equation}
\left\{  \rho_{u_{1}+u_{2}+u_{3}}\otimes\rho_{u_{3}}\otimes\rho_{u_{2}}%
\otimes\rho_{0}-\rho_{u_{1}+u_{2}+u_{3}+1}\otimes\rho_{u_{3}+1}\otimes
\rho_{u_{2}+1}\otimes\rho_{1}\geq0\right\}
.\label{eq:last-four-qubit-decision}%
\end{equation}

\section{Measurements for the Eight-Bit Polar Decoder \label{8bit measurements}}

Here, we provide the form of
a quantum successive cancellation decoder that recovers each bit of an
eight-bit polar code. Full derivations of the results in this section
are available from the authors upon request.

\subsection{Recovering $u_{1}$}

The test to recover
the first bit $u_{1}$ is simply the parity of eight individual Helstrom
measurements:%
\begin{equation}
\left\{  \left(  \rho_{0}-\rho_{1}\right)  ^{\otimes8}\geq0\right\}  .
\label{8bit_u1}
\end{equation}

\subsection{Recovering $u_{2}$ given $u_{1}$}

The test to recover bit $u_{2}$ given $u_{1}$ projects onto the
positive eigenspace of the difference of%
\begin{multline}
\left(  \sum_{u_{3}^{\prime},u_{4}^{\prime},u_{5}^{\prime}}%
\rho_{u_{1}+u_{3}^{\prime}+u_{4}^{\prime}+u_{5}^{\prime}}\otimes\rho
_{u_{3}^{\prime}}\otimes\rho_{u_{4}^{\prime}}\otimes\rho_{u_{5}^{\prime}%
}\right)  
\otimes\left(  \sum_{u_{6}^{\prime},u_{7}^{\prime}%
,u_{8}^{\prime}}\rho_{u_{6}^{\prime}+u_{7}^{\prime}+u_{8}^{\prime}}\otimes
\rho_{u_{6}^{\prime}}\otimes\rho_{u_{7}^{\prime}}\otimes\rho_{u_{8}^{\prime}%
}\right)
\end{multline}
and%
\begin{multline}
\left(  \sum_{u_{3}^{\prime},u_{4}^{\prime},u_{5}^{\prime}}%
\rho_{u_{1}+u_{3}^{\prime}+u_{4}^{\prime}+u_{5}^{\prime}}\otimes\rho
_{u_{3}^{\prime}}\otimes\rho_{u_{4}^{\prime}}\otimes\rho_{u_{5}^{\prime}%
}\right)  
\otimes\left(  \sum_{u_{6}^{\prime},u_{7}^{\prime}%
,u_{8}^{\prime}}\rho_{u_{6}^{\prime}+u_{7}^{\prime}+u_{8}^{\prime}}\otimes
\rho_{u_{6}^{\prime}}\otimes\rho_{u_{7}^{\prime}}\otimes\rho_{u_{8}^{\prime}%
}\right)  .
\label{8bit_u2}
\end{multline}
As such, it is not clear to us how one could approximate this test as some
combination of Helstrom and Fuchs-Caves tests. 

\subsection{Recovering $u_{3}$ given $u_{2}$, and $u_{1}$}

The test to recover bit $u_{3}$
given $u_{1}$ and $u_{2}$ is equal to the parity of the following two tests:%
\begin{align}
& \left\{
\begin{array}
[c]{c}%
\left(  \sum_{u_{4}^{\prime}}\rho_{u_{1}+u_{2}+u_{4}^{\prime}}\otimes
\rho_{u_{4}^{\prime}}\right)  \otimes\left(  \sum_{u_{5}^{\prime}}\rho
_{u_{5}^{\prime}}\otimes\rho_{u_{5}^{\prime}}\right)  \\
-\left(  \sum_{u_{4}^{\prime}}\rho_{u_{1}+u_{2}+1+u_{4}^{\prime}}\otimes
\rho_{u_{4}^{\prime}}\right)  \otimes\left(  \sum_{u_{5}^{\prime}}%
\rho_{1+u_{5}^{\prime}}\otimes\rho_{u_{5}^{\prime}}\right)  \geq0
\end{array}
\right\}  ,\\
& \left\{
\begin{array}
[c]{c}%
\left(  \sum_{u_{6}^{\prime}}\rho_{u_{2}+u_{6}^{\prime}}\otimes\rho
_{u_{6}^{\prime}}\right)  \otimes\left(  \sum_{u_{8}^{\prime\prime}}%
\rho_{u_{8}^{\prime\prime}}\otimes\rho_{u_{8}^{\prime\prime}}\right)  \\
-\left(  \sum_{u_{6}^{\prime}}\rho_{u_{2}+u_{6}^{\prime}+1}\otimes\rho
_{u_{6}^{\prime}}\right)  \otimes\left(  \sum_{u_{8}^{\prime\prime}}%
\rho_{1+u_{8}^{\prime\prime}}\otimes\rho_{u_{8}^{\prime\prime}}\right)  \geq0
\end{array}
\right\}  .
\label{8bit_u3}
\end{align}
It is again unclear to us how to decompose this measurement further. 

\subsection{Recovering $u_{4}$ given $u_{3}$, $u_{2}$, and $u_{1}$}

The test
to recover bit $u_{4}$ given $u_{1}$, $u_{2}$, and $u_{3}$
projects onto the positive eigenspace of the difference of%
\begin{multline}
\left(  \sum_{u_{5}^{\prime}}\rho_{u_{1}+u_{2}+u_{3}+u_{5}^{\prime}}%
\otimes\rho_{u_{5}^{\prime}}\right)  \otimes\left(  \sum_{u_{6}^{\prime}}%
\rho_{u_{3}+u_{6}^{\prime}}\otimes\rho_{u_{6}^{\prime}}\right)  
\otimes\left(  \sum_{u_{7}^{\prime}}\rho_{u_{2}+u_{7}^{\prime}}\otimes
\rho_{u_{7}^{\prime}}\right)  \otimes\left(  \sum_{u_{8}^{\prime}}\rho
_{u_{8}^{\prime}}\otimes\rho_{u_{8}^{\prime}}\right)
\end{multline}
and%
\begin{multline}
\left(  \sum_{u_{5}^{\prime}}\rho_{u_{1}+u_{2}+u_{3}+1+u_{5}^{\prime}}%
\otimes\rho_{u_{5}^{\prime}}\right)  \otimes\left(  \sum_{u_{6}^{\prime}}%
\rho_{u_{3}+1+u_{6}^{\prime}}\otimes\rho_{u_{6}^{\prime}}\right)  
\otimes\left(  \sum_{u_{7}^{\prime}}\rho_{u_{2}+1+u_{7}^{\prime}}\otimes
\rho_{u_{7}^{\prime}}\right)  \otimes\left(  \sum_{u_{8}^{\prime}}%
\rho_{1+u_{8}^{\prime}}\otimes\rho_{u_{8}^{\prime}}\right)
\label{8bit_u4}
\end{multline}
Again, this one remains unclear how to decompose further. 

\subsection{Recovering $u_{5}$ given $u_{4}$, \ldots, $u_{1}$}

The test to recover
bit $u_{5}$ given $u_{1}$ through $u_{4}$ is equal to%
\begin{equation} 
\left\{
\begin{array}
[c]{c}%
\left(  \rho_{u_{1}+u_{2}+u_{3}+u_{4}}\otimes\rho_{0}-\rho_{u_{1}+u_{2}%
+u_{3}+u_{4}+1}\otimes\rho_{1}\right)  \otimes\left(  \rho_{u_{3}+u_{4}%
}\otimes\rho_{0}-\rho_{u_{3}+u_{4}+1}\otimes\rho_{1}\right)  \\
\otimes\left(  \rho_{u_{2}+u_{4}}\otimes\rho_{0}-\rho_{u_{2}+u_{4}+1}%
\otimes\rho_{1}\right)  \otimes\left(  \rho_{u_{4}}\otimes\rho_{0}-\rho
_{u_{4}+1}\otimes\rho_{1}\right)  \geq0
\end{array}
\right\}  .
\label{8bit_u5}
\end{equation}
It is easy to see that one could approximate this test by
first performing four Fuchs-Caves measurements
on adjacent pairs of channel outputs and taking the parity of these
tests. 

\subsection{Recovering $u_{6}$ given $u_{5}$, \ldots, $u_{1}$}

The test to recover bit $u_{6}$ given $u_{1}$ through $u_{5}$ is a
projection onto the positive eigenspace of the difference of%
\begin{multline}
\left(  \sum_{u_{7}^{\prime}}\rho_{u_{1}+\cdots+u_{5}+u_{7}%
^{\prime}}\otimes\rho_{u_{5}+u_{7}^{\prime}}\otimes\rho_{u_{3}+u_{4}%
+u_{7}^{\prime}}\otimes\rho_{u_{7}^{\prime}}\right)  
\otimes\left(  \sum_{u_{8}^{\prime}}\rho_{u_{2}+u_{4}+u_{8}^{\prime}}%
\otimes\rho_{u_{8}^{\prime}}\otimes\rho_{u_{4}+u_{8}^{\prime}}\otimes
\rho_{u_{8}^{\prime}}\right)
\end{multline}
and%
\begin{equation}
\left(  \sum_{u_{7}^{\prime}}\rho_{u_{1}+\cdots+u_{5}%
+1+u_{7}^{\prime}}\otimes\rho_{u_{5}+1+u_{7}^{\prime}}\otimes\rho_{u_{3}%
+u_{4}+u_{7}^{\prime}}\otimes\rho_{u_{7}^{\prime}}\right)  
\otimes\left(  \sum_{u_{8}^{\prime}}\rho_{u_{2}+u_{4}+1+u_{8}^{\prime}}%
\otimes\rho_{1+u_{8}^{\prime}}\otimes\rho_{u_{4}+u_{8}^{\prime}}\otimes
\rho_{u_{8}^{\prime}}\right)  .
\label{8bit_u6}
\end{equation}
A simple decomposition of this test remains unclear. 

\subsection{Recovering $u_{7}$ given $u_{6}$, \ldots, $u_{1}$}

The test for recovering
bit $u_{7}$ given the previous ones is%
\begin{equation}
\left\{
\begin{array}
[c]{c}%
\left(  \rho_{u_{1}+\cdots +u_{6}}\otimes\rho_{u_{5}+u_{6}}\otimes\rho
_{u_{3}+u_{4}}\otimes\rho_{0}-\rho_{u_{1}+\cdots+u_{6}+1}\otimes\rho
_{u_{5}+u_{6}+1}\otimes\rho_{u_{3}+u_{4}+1}\otimes\rho_{1}\right)  \otimes\\
\left(  \rho_{u_{2}+u_{4}+u_{6}}\otimes\rho_{u_{6}}\otimes\rho_{u_{4}}%
\otimes\rho_{0}-\rho_{u_{2}+u_{4}+u_{6}+1}\otimes\rho_{u_{6}+1}\otimes
\rho_{u_{4}+1}\otimes\rho_{1}\right)  \geq0
\end{array}
\right\}  ,
\label{8bit_u7}
\end{equation}
which is clearly implementable by performing a Fuchs-Caves measurement on the
first four qubits and the last four, and than taking the parity of these two
tests. 

\subsection{Recovering $u_{8}$ given $u_{7}$, \ldots, $u_{1}$}

The final test for recovering the last bit $u_{8}$ given all others is
a projection onto the positive eigenspace of the difference of%
\begin{equation}
\rho_{u_{1}+\cdots+u_{7}}\otimes\rho_{u_{5}+u_{6}%
+u_{7}}\otimes\rho_{u_{3}+u_{4}+u_{7}}\otimes\rho_{u_{7}}
\otimes\rho_{u_{2}+u_{4}+u_{6}}\otimes\rho_{u_{6}}\otimes\rho_{u_{4}}%
\otimes\rho_{0},
\end{equation}
and%
\begin{equation}
\rho_{u_{1}+\cdots +u_{7}+1}\otimes\rho_{u_{5}%
+u_{6}+u_{7}+1}\otimes\rho_{u_{3}+u_{4}+u_{7}+1}\otimes\rho_{u_{7}+1}
\otimes\rho_{u_{2}+u_{4}+u_{6}+1}\otimes\rho_{u_{6}+1}\otimes\rho_{u_{4}%
+1}\otimes\rho_{1}.
\label{8bit_u8}
\end{equation}
It is clear that we can approximate this test with a Fuchs-Caves measurement.

\end{document}